\begin{document}

\title{On Hamiltonian Structures of \\ Partial Difference Equations}
\author{Zhonglun Cao}
\maketitle

\begin{abstract}
We first introduce the notion of Hamiltonian structure for a partial difference equation. Then we construct some infinite quivers, and realize 
the discrete KdV equation, the Hirota-Miwa equation and its various reductions as the mutation relations of the corresponding cluster algebras.
Finally, we show that the log-canonical Poisson structures associated to these cluster algebras give the Hamiltonian structures or the
bihamiltonian structures of these partial difference equations.
\end{abstract}

\newtheorem{thm}{Theorem}[section]

\theoremstyle{definition}
\newtheorem{eg}{Example}[section]

\theoremstyle{definition}
\newtheorem{definition}{Definition}[section]

\theoremstyle{remark}
\newtheorem*{remark}{Remark}

\section{Introduction}

The Hamiltonian structures and bihamiltonian structures of an integrable partial differential equation play an important role in the study of
the integrability of these equations. Recently, Boris Dubrovin, Youjin Zhang and their collaborators established a classification
theory for partial differential equations possessing semisimple bihamiltonian structures\cite{donagi1996geometry,dubrovin2001normal},\cite{supertaucover}-\cite{liu2021linearization}. They showed that, to each semisimple cohomological field
theory (CohFT), one can associate a hiearchy of integrable partial differential equations, which possess many nice properties, including the semisimple
bihamiltonian structures, such that the partition function of this field theory is a tau function of this integrable hierarchy. 

It is interesting that, for certain CohFT, the Dubrovin-Zhang hierarchy actually does not consist of partial \emph{differential} equations but
partial \emph{difference} equations with continuous time variable and discrete spatial variables, such as the Toda hierarchy, which corresponds
to the Gromov-Witten theory of the projective line. In these cases, we can still treat the partial difference equation as a vector field on
an infinite-dimensional manifold, and talk about its Hamiltonian structures or bihamiltonian structures.

On the other hand, there are many important integrable partial difference equation whose time variable and spatial variable are both discrete\cite{hirota1977nonlinear1,date1982method}.
These kind of integrable systems usually describe certain discrete symmetries of a system, and it is not easy or possible to find out the
infinitesimal generators of these discrete symmetries, so one cannot talk about its Hamiltonian structures. 

In the present paper, we try to give a definition of Hamiltonian structures for a partial difference equation with both discrete time variable
and spatial variable. Then, to illustrate that our definition is reasonable, we consider some examples, which are closely related to cluster algebra,
and show that these partial difference equation do possess Hamiltonian structures in our sense, by using some results in the theory of the
cluster algebra.

Cluster algebra theory was founded by the series of papers by Fomin and Zelevinsky, together with Berenstein\cite{fomin2002cluster,fomin2003cluster,berenstein2005cluster,fomin2007cluster}. A cluster algebra is defined by seeds consisting of cluster variables, coefficients and skew-symmetrizable (integer) matrices. There is an operation called seed mutation defined on the seeds. If we only consider the cluster algebra without coefficients, it can be defined by quivers (directed graphs) whose vertices correspond to variables on which an operation called quiver mutation acts. The Poisson structure of cluster algebras was studied by Gekhtman, Shapiro and Vainshtein\cite{gekhtman2003cluster}, there exists a log-canonical Poisson bracket on cluster variables and it is mutation compatible. It is known that some difference equations can be connected with periodic cluster algebras by the formula of variables mutation. Inoue and Nakanishi extended the Poisson structure to some difference equation that related to cluster algebras \cite{inoue2011difference}. We will consider more examples of partial difference equations that are rerlated to infinite quivers and cluster algebras, and show that the log-canonical Poisson structure of the cluster algebra gives the Hamiltonian structures
of the corresponding partial difference equation.

Our definition also works for the ordinary difference equations which correspond to finite quivers. Their Poisson structure were studied by Hone and collaborators\cite{fordy2014discrete,hone2017reductions,hone2019cluster}.

This paper is organized as follows: We first introduce the notion of Hamiltonian structure of difference equations in section 2;
then we recall the definition of cluster algebra and Poisson structure of it in section 3. In section 4, we consider some specific quivers corresponding to the discrete KdV equation and calculate the general form of the Hamiltonian structure of the discrete KdV equation. Finally in section 5, we give some other reductions of Hirota-Miwa equation and the Hamiltonian structures of them.

\section{Hamiltonian structures of difference equations}

A difference equation contains several unknown functions $u=(u^1, \dots, u^m)$, each of which depends on several independent variables
$n=(n_1, \dots, n_d)\in\mathbb{Z}^d$. We introduce the shift operator for the $i$-th independent variable
\[\Lambda_i(u(n_1, \dots, n_d)):=u(n_1, \dots, n_{i}+1, \dots, n_d),\]
and denote its $p$-power as $\Lambda_i^p$ for $p\in\mathbb{Z}$, then a typical difference equation takes the following form
\begin{equation}
Z(n):=Z(u(n),\Lambda_{\cdots}^{\pm 1}(u(n)),\Lambda_{\cdots}^{\pm 2}(u(n)),\dots)=0. \label{dc-1}
\end{equation}
This equation should hold for every $n\in\mathbb{Z}^d$, so for an $h\in \mathbb{Z}^d$, we have
\[Z(n+h)=Z(u(n+h),\Lambda_{\cdots}^{\pm 1}(u(n+h)),\Lambda_{\cdots}^{\pm 2}(u(n+h)),\dots)=0.\]
Hence the difference equation \eqref{dc-1} is \emph{translation invariant}.

Denote by $\mathcal{R}$ the polynomial ring of all symbols $\{u(n)\mid n\in\mathbb{Z}^d\}$ over a certain field $\mathbb{K}$, then $\Lambda_i \ (i=1, \dots, d)$ can be
extended as automorphisms on $\mathcal{R}$. An ideal $J$ of $\mathcal{R}$ is called translation invariant, if for all $i=1, \dots, d$, and 
$p\in\mathbb{Z}$, we have $\Lambda_i^p(J)\subseteq J$. Then for a system of difference equation $Z=(Z_1, \dots, Z_n)$ such that each
$Z_i$ takes the form \eqref{dc-1}, we can defined the translation invariant ideal $J_Z$ generated by $Z$. 

A Poisson bracket on $\mathcal{R}$ is defined as usual, that is a bilinear operation
\[\{\ ,\}:\mathcal{R}\times\mathcal{R}\to\mathcal{R},\]
satisfying the skew-symmetric condition, the Leibniz rule, and the Jacobi identity. A Poisson bracket is called \emph{translation invariant},
if for any $f, g\in\mathcal{R}$, and any shift operator $\Lambda_i^p$ we have
\begin{equation}\label{traninvariant}
	\{\Lambda_i^p(f), \Lambda_i^p(g)\}=\Lambda_i^p(\{f,g\}).
\end{equation}

\begin{definition}\label{definition}
For a system of difference equation $Z$, and a translation invariant Poisson bracket $\{\ ,\}$, if the translation invariant ideal $J_Z$ generated
by $Z$ is also an ideal of the Lie algebra $(\mathcal{R}, \{\ ,\})$, that is $\{J_Z, \mathcal{R}\}\subseteq J_Z$, then we say that
$\{\ ,\}$ is a Hamiltonian structure of the discrete system $Z$.
\end{definition}

We can also defined the notion of bihamiltonian structure for a discrete system.
\begin{definition}
For a system of difference equation $Z$, and two translation invariant Poisson bracket $\{\ ,\}_1$ and $\{\ ,\}_2$, if for any constant 
$\lambda\in\mathbb{K}$, the linear combination $\{\ ,\}_\lambda=\{\ ,\}_2-\lambda \{\ ,\}_1$ is always a Hamiltonian structure of $Z$,
then we say that the pair $(\{\ ,\}_1, \{\ ,\}_2)$ is a bihamiltonian structure of the discrete system $Z$.
\end{definition}

\section{Cluster algebras}

In this section, we recall the notions of cluster algebra introduced by Fomin and Zelevinsky\cite{fomin2002cluster}. One can see the relation between cluster algebras and the difference equations. We also recall the Poisson structures corresponding to cluster algebras.

\subsection{Quiver and cluster algebra}
Let $I \subset \mathbb{Z}$ be an index set and $B=(b_{ij})_{i,j\in I}$ be a skew-symmetric integer matrix. When $I$ is infinite, we always assume that an infinite dimensional skew-symmetric matrix B has only finitely many nonzero elements in each row and in each column.

\begin{definition}[matrix mutation]
For a skew-symmetric matrix $B$ and $k \in I$, we have the matrix mutation $B' := \mu_{i}(B)$ of $B$ at index $k$ defined by

$$
b_{i j}^{\prime}= \begin{cases}-b_{i j} & i=k \text { or } j=k \\ b_{i j}+b_{i k}\left[b_{k j}\right]_{+}+\left[-b_{i k}\right]_{+} b_{k j} & i, j \neq k .\end{cases}
$$
where $[a]_+ := \max\{a,0\}$.
\end{definition}

Any skew-symmetric matrix can be represented by a quiver. A quiver is a directed graph, there are two structures called a loop and a 2-cycle, respectively. 

\begin{equation*}
    \begin{tikzcd}
	\arrow[loop left, distance=6em, start anchor={[yshift=-1ex]west}, end anchor={[yshift=1ex]west}]{}{}x_1&&&{x_1} &&& {x_2}
	\arrow[curve={height=-18pt}, from=1-4, to=1-7]
	\arrow[curve={height=-18pt}, from=1-7, to=1-4]
    \end{tikzcd}\
\end{equation*}

For a skew-symmetric matrix $B$ with an index set $I$, one can associate a quiver $Q(B)$ without loops and 2-cycles by the following rules:
\begin{itemize}
    \item Each index $i$ corresponds to a vertex $x_i$.
    \item If and only if $b_{ij} > 0$, draw $b_{ij}$ arrows from the vertex $x_i$ to the vertex $x_j$.
\end{itemize}
Since $b_{ii} = 0$ for any index $x_i$, there is no loops. If $b_{ij} > 0$, we have $b_{ji} < 0$, so there is no 2-cycles.

Conversely, from a quiver without loops and 2-cycles, one can apply the above rules in the opposite direction and get a corresponding skew-symmetric matrix B.
It is clear that this correspondence is one-to-one.

Consider a quiver without loops or 2-cycles, denoted by $Q$. There also is a corresponding mutation operation for $Q$ by translating the matrix mutation. 

\begin{definition}[quiver mutation]
For any vertex $x_i$ in $Q$, the mutation operation $\mu_i$ at $x_i$ is defined as follows:
\begin{itemize}
    \item For each pair $(j,k)$, such that $j,k\neq i $, if there are $p > 0$ arrows from the vertex $x_j$ to the vertex $x_i$, and $q > 0$ arrows from the vertex $x_i$ to the vertex $x_k$, draw $pq$ new arrows from $x_j$ to $x_k$.
    \item Remove remove all 2-cycles in the quiver thus obtained.
    \item Invert all arrows into and out of the vertex $x_i$.
\end{itemize}
The new quiver is denoted by $Q':=\mu_i(Q)$.

\end{definition}

Let's introduce some more fundamental notions of the cluster algebras.
\begin{definition}[Seed, cluster variable]
    A seed $(Q,\{x_i\})$ is a pair consisting of a quiver $Q$ and the set $\{x_i\}_{i \in I}$ of vertices in $Q$. Each vertex in the quiver of a seed is called a cluster variable.
\end{definition}

\begin{definition}[Seed mutation]
    For any seed $(Q,\{x_i\})$ and a vertex $x_k$, define the seed mutation $\mu_k$ at $x_k$ as follows:
    \begin{itemize}
        \item $\mu_k(Q)$ is just the quiver mutation at $x_k$ of quiver $Q$.
        \item 
    $\begin{aligned}
\mu_k(x_i) := \begin{cases}\frac{\prod_{j \leftarrow k} x_j+\prod_{j \rightarrow k} x_j}{x_k} & i=k, \\
x_i & i \neq k.\end{cases}
\end{aligned}$
    \end{itemize}
The above symbol $\prod_{j \leftarrow k} x_j( \text{resp.} \prod_{j \rightarrow k} x_j)$ means the product of all vertices in $Q$ which have arrows from (resp. to) the vertex $x_k$.
\end{definition}
The new seed is denoted by $(Q',\{x_i^{\prime}\})$. It is clear that the seed mutation is involutive, $i.e., \mu_k^2 =$id. 

For an initial seed $(Q,\{x_i\})$, there are $|I|$ (the number of the element of the index set $I$, it can be infinite) mutations $\{\mu_i\}_{i \in I}$ and $|I|$ new seeds. Each new seed also has as many as mutations. By iterating mutations from the initial seed $(Q,\{x_i\})$, we define $\mathcal{X}$ to be the set of all obtained cluster variables.
\begin{definition}[Cluster algebra]
   The cluster algebra (without coefficients) $\mathcal{A}(Q,\{x_i\})$ is a $\mathbb{Z}$-subalgebra generated by $\mathcal{X}$.
\begin{equation*}
\mathcal{A}(Q,\{x_i\})=\mathbb{Z}[x \mid x \in \mathcal{X}] \subset \mathbb{Q}(x \mid x \in \mathcal{X}).
\end{equation*}
\end{definition}


For simplicity, we introduce an example with the finite index to illustrate the connection between the quiver and the difference equation. 
\begin{eg}[Somos-4]
Somos-4 is an ordinary difference equation: 
\begin{equation*}
    x_{n} x_{n+4}=x_{n+1} x_{n+3}+x_{n+2}^2.
\end{equation*}
The quiver $G$ corresponding to somos-4 is defined as:
\begin{equation}\label{qui:somos4}
    \begin{tikzcd}[ampersand replacement=\&]
    	{x_{1}} \& {x_{2}} \& {x_{3}} \& {x_{4}}
	\arrow[from=1-2, to=1-1]
	\arrow[from=1-4, to=1-3]
	\arrow["3"{description}, from=1-3, to=1-2]
	\arrow["2"{description}, curve={height=18pt}, from=1-1, to=1-3]
	\arrow["2"{description}, curve={height=18pt}, from=1-2, to=1-4]
	\arrow[curve={height=24pt}, from=1-4, to=1-1]
        \end{tikzcd}
\end{equation}

We can view $x_{5}$ as a cluster variable obtained by the mutation at $x_1$ of $G$ since 
\begin{equation*}
    \mu_1(x_1)=\frac{x_{2} x_{4}+x_{3}^2}{x_{1}}=x_5.
\end{equation*}
Then $G'= \mu_1(G)$ is 
\[\begin{tikzcd}[ampersand replacement=\&]
    	{x_1'} \& {x_2} \& {x_3} \& {x_4}
	\arrow[from=1-1, to=1-2]
	\arrow["3"{description}, from=1-4, to=1-3]
	\arrow[from=1-3, to=1-2]
	\arrow["2"{description}, curve={height=-18pt}, from=1-3, to=1-1]
	\arrow["2"{description}, curve={height=18pt}, from=1-2, to=1-4]
	\arrow[curve={height=-24pt}, from=1-1, to=1-4]
\end{tikzcd}\]
At last, we replace $x_1'$ with $x_5$, and move it to the tail of the queue:
\[\begin{tikzcd}[ampersand replacement=\&]
    	{x_2} \& {x_3} \& {x_4} \& {x_5}
	\arrow[from=1-2, to=1-1]
	\arrow[from=1-4, to=1-3]
	\arrow["3"{description}, from=1-3, to=1-2]
	\arrow["2"{description}, curve={height=18pt}, from=1-1, to=1-3]
	\arrow["2"{description}, curve={height=18pt}, from=1-2, to=1-4]
	\arrow[curve={height=24pt}, from=1-4, to=1-1]
\end{tikzcd}\]
Notice that this new quiver is just $G$, except replacing each $x_n$ with $x_{n+1}$. By iterating the above process we can conclude that the variables in somos-4 consist a subset of the cluster algebra $\mathcal{A}(G,\{x_i|i=1,2,3,4\})$. Hence some good properties of the cluster algebra can be proved in somos-4, such as Laurent phenomenon.
\end{eg}

\subsection{Poisson structure of cluster algebra}

There are some results about the Poisson structure of cluster algebras and difference equations \cite{gekhtman2003cluster}\cite{inoue2011difference}. We are especially concerned about the case with an infinite index set. 

Fix a skew-symmetric matrix $B=(b_{ij})$ with the index set $I$. The corresponding seed is $(Q,\{x_i\})$.
\begin{definition}[Log-canonical Poisson bracket]
The log-canonical Poisson bracket for $\{x_i\}$ is defined by
\begin{equation}
\label{equ:poisson1}
    \{x_i,x_j\}=p_{ij} x_i x_j,\quad p_{ij} \in \mathbb{Q}
\end{equation}
satisfying
\begin{itemize}
    \item $skew-symmetric: \{x_i,x_j\}=-\{x_j,x_i\}, i.e., p_{ij}=-p_{ji},$\\
    \item $Jacobi-identity: \{x_i,\{x_j,x_k\}\}+\{x_j,\{x_k,x_i\}\}+\{x_k,\{x_i,x_j\}\}=0$
\end{itemize}
\end{definition}

We want to study the log-canonical Poisson bracket which is compatible
with the seed mutation in the following sense:
\begin{definition}[mutation compatible]
    We say a Poisson bracket for $\{x_i\}_{i \in I}$ is mutation
compatible if, for any index $k$, the bracket induced for $\{x_i'\}=\{\mu_k(x_i)\}$ again has the form $ \{x_i',x_j'\}=p_{ij}' x_i' x_j'$.
\end{definition}

\begin{thm}\cite{inoue2011difference}\label{thm:pmutation}
(i)For a skew-symmetric matrix $P=(p_{ij})_{i,j \in I}$, the corresponding Poisson bracket (\ref{equ:poisson1}) is mutation compatible if and only if $PB$ is a diagonal matrix.

(ii) Suppose that $PB$ is a diagonal matrix. Let $P'=(p_{ij}')_{i,j \in I}$ be the matrix for the induced Poisson bracket $ \{x_i',x_j'\}=p_{ij}' x_i' x_j'$. Then $P'$ is given by
\begin{equation}\label{equ:pmutation}
p_{i j}^{\prime}= \begin{cases}-p_{i k}+\sum_{l: b_{l k}>0} b_{l k} p_{i l} & i \neq j=k, \\ -p_{k j}+\sum_{l: b_{l k}>0} b_{l k} p_{l j} & k=i \neq j, \\ p_{i j} & \text { otherwise. }\end{cases}
\end{equation}
\end{thm}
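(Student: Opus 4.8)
The plan is to reduce the whole statement to a single mutation $\mu_k$ and a direct computation of the induced brackets $\{x_i', x_j'\}$. A useful observation at the outset is that the bracket lives on the entire field $\mathbb{Q}(x \mid x \in \mathcal{X})$, and the mutated variables $x_i'$ are simply particular elements of this field; hence the induced bracket automatically inherits skew-symmetry and the Jacobi identity, and the only thing to check is that it is again \emph{log-canonical} in the $x_i'$. I would first dispose of the trivial entries: for $i, j \neq k$ one has $x_i' = x_i$ and $x_j' = x_j$, so $\{x_i', x_j'\} = p_{ij} x_i' x_j'$ at once, which gives $p_{ij}' = p_{ij}$ (the ``otherwise'' case of (ii)). All the content therefore sits in the brackets $\{x_k', x_j\}$ for $j \neq k$.

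For the core computation I would write the exchange relation as $x_k x_k' = M_+ + M_-$, where $M_+ = \prod_l x_l^{[b_{lk}]_+}$ and $M_- = \prod_l x_l^{[-b_{lk}]_+}$; note that neither monomial involves $x_k$, since $b_{kk} = 0$. For any Laurent monomial $M = \prod_l x_l^{a_l}$, the Leibniz rule together with log-canonicity gives the clean identity $\{M, x_j\} = \big(\sum_l a_l\, p_{lj}\big) M x_j$. Applying this to $M_\pm$, setting $\alpha_\pm = \sum_l [\pm b_{lk}]_+\, p_{lj}$, and using $\{f/g, h\} = g^{-1}\{f, h\} - f g^{-2}\{g, h\}$ on $x_k' = (M_+ + M_-)/x_k$, I expect to arrive at
\[
\{x_k', x_j\} = \frac{x_j}{x_k}\big[(\alpha_+ - p_{kj}) M_+ + (\alpha_- - p_{kj}) M_-\big].
\]
Since $x_k' x_j = (M_+ + M_-) x_j / x_k$, this bracket is log-canonical exactly when the coefficients of $M_+$ and $M_-$ coincide, i.e. when $\alpha_+ = \alpha_-$, in which case $p_{kj}' = \alpha_+ - p_{kj}$.

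It then remains to translate $\alpha_+ = \alpha_-$ into the matrix condition. Using the elementary identity $[a]_+ - [-a]_+ = a$, I would compute $\alpha_+ - \alpha_- = \sum_l b_{lk}\, p_{lj} = -(PB)_{jk}$. Thus the mutated bracket is log-canonical for a given $k$ and all $j$ iff the off-diagonal entries of the $k$-th column of $PB$ vanish; requiring compatibility for every $k$ is exactly the condition that $PB$ be diagonal, which proves (i). When $PB$ is diagonal, reading off the common value $\alpha_+ = \sum_{l : b_{lk} > 0} b_{lk} p_{lj}$ yields $p_{kj}' = -p_{kj} + \sum_{l : b_{lk} > 0} b_{lk} p_{lj}$, the $k = i \neq j$ case of (ii); the case $i \neq j = k$ follows from skew-symmetry via $p_{ik}' = -p_{ki}'$.

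The step I expect to be the main obstacle is justifying the coefficient comparison rigorously: deducing $\alpha_+ = \alpha_-$ from log-canonicity requires that $M_+$ and $M_-$ be genuinely distinct (non-proportional) monomials in the field of fractions, so that the equality of two Laurent polynomials forces equality of coefficients. This holds whenever vertex $k$ is not isolated. I would treat the degenerate case $b_{lk} = 0$ for all $l$ separately: there $M_+ = M_- = 1$, both $\alpha_\pm$ and $(PB)_{jk}$ vanish, and compatibility holds trivially, so the equivalence in (i) remains valid.
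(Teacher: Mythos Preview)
The paper does not supply its own proof of this theorem; it is quoted from \cite{inoue2011difference} and used as a black box. Your argument is correct and is essentially the standard direct computation one finds in that reference: expand $\{x_k',x_j\}$ via the exchange relation, observe that log-canonicity forces the coefficients of the two exchange monomials to agree, and identify that condition with the vanishing of the off-diagonal part of $PB$. The handling of the degenerate isolated-vertex case is also right. There is nothing to compare against in the present paper, but your write-up would serve perfectly well as the missing proof.
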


We can view the formula (\ref{equ:pmutation}) as the Poisson bracket mutation.
\begin{definition}[Poisson structure along the mutation]
Suppose that $P$ define a log-canonical Poisson bracket for $\{x_i\}_{i \in I}$ and it is mutation compatible. 
We say there is a Poisson structure along the mutation if for any seed which can be obtained by finitely many mutations from the initial seed, there exists a mutation compatible log-canonical Poisson bracket which can be obtained in the same mutation way. The Poisson bracket matrix $P$ of the initial seed is called the initial Poisson matrix.
\end{definition}
Notice that the Poisson structure along the mutation doesn't always exist. According theorem \ref{thm:pmutation}, we need that $P'B'$ is still a diagonal matrix.

Now consider the case that $I$ is an infinite index set. For a pair of matrices $M$ and $N$ with the infinite index set $I$, we say $M$ is a left inverse of $N$ if $MN = id_{I}$.

\begin{thm}\cite{inoue2011difference}\label{thm:pb0}
Suppose that $I$ is an infinite index set, $B$ is a skew-symmetric matrix with index set $I$ and $B$ has no left inverse. There exists a nontrivial Poisson structure along the mutation if there is a nontrivial skew-symmetric matrix $P$ such that $PB =O$. Furthermore, for each pair $(B',P')$ obtained by mutation, we have $P'B' =O$.
\end{thm}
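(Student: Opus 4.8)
The plan is to realize both mutation rules---the exchange rule for $B$ and the induced rule \eqref{equ:pmutation} for $P$---as congruence transformations by one and the same \emph{involutive} matrix, and then to exploit the cancellation that this forces in the product $PB$. First I would fix a mutation index $k$ and introduce the matrix $E=E_k$ that agrees with the identity except in its $k$-th column, where $E_{ik}=[b_{ik}]_+$ for $i\neq k$ and $E_{kk}=-1$. A termwise check against the matrix mutation shows that $B'=EBE^{T}$, and an equally termwise check against the formula \eqref{equ:pmutation} of Theorem \ref{thm:pmutation} shows that the induced Poisson matrix is $P'=E^{T}PE$ with the \emph{same} $E$. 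Both verifications use only the elementary identity $[a]_+-[-a]_+=a$ together with the skew-symmetry $b_{ij}=-b_{ji}$; the standing assumption that $B$ has finitely many nonzero entries in each row and column guarantees that the $k$-th column of $E$ has finite support, so all the infinite matrix products below are well defined and associative.

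Next I would prove that $E$ is an involution, $E^{2}=\mathrm{id}_{I}$. This is immediate from the shape of $E$: it fixes every basis vector $e_j$ with $j\neq k$, while on $e_k$ one computes $Ee_k=v$ and $Ev=e_k$, where $v$ is the $k$-th column of $E$; this is just the matrix shadow of the fact, recorded after the definition of seed mutation, that $\mu_k^2=\mathrm{id}$. Granting this, the key identity follows by telescoping:
\[
P'B'=\bigl(E^{T}PE\bigr)\bigl(EBE^{T}\bigr)=E^{T}P\,(EE)\,BE^{T}=E^{T}(PB)\,E^{T}.
\]
In particular, if $PB=O$ then $P'B'=E^{T}O\,E^{T}=O$. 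Since every seed reachable from the initial one is obtained by a finite sequence of mutations, applying this identity at each step and inducting gives $P^{(m)}B^{(m)}=O$ at every such seed, which is exactly the ``furthermore'' assertion.

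It then remains to read off existence and nontriviality. Because $O$ is in particular a diagonal matrix, part (i) of Theorem \ref{thm:pmutation} shows that each $P^{(m)}$ defines a mutation-compatible log-canonical bracket on the corresponding seed, obtained by the same mutation sequence; by definition this is precisely a Poisson structure along the mutation. Nontriviality propagates because $E$ is invertible, so $P'=E^{T}PE$ vanishes only if $P$ does, whence a nontrivial initial $P$ stays nontrivial along every branch. Finally, the hypothesis that $B$ has no left inverse is what keeps the statement from being vacuous: for skew-symmetric $B$ a left inverse exists if and only if a right inverse does (transpose and use $B^{T}=-B$), and the absence of a right inverse is exactly what permits a nonzero row vector $r$ with $rB=0$, i.e.\ a nontrivial skew-symmetric $P$ whose rows lie in the left kernel of $B$ and satisfy $PB=O$.

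I expect the main obstacle to be the bookkeeping in the first step: matching the sign conventions so that a single $E$ yields both $B'=EBE^{T}$ and $P'=E^{T}PE$. Were the two mutation rules realized by different matrices, the telescoping $EE=\mathrm{id}$ would not occur and the clean conclusion $P'B'=E^{T}(PB)E^{T}$ would break, so the crux is checking that the sign choice implicit in \eqref{equ:pmutation} is consistent for both transformations. A secondary, genuinely infinite-dimensional point to watch is the legitimacy of regrouping the infinite products, which is why the row/column finiteness of $B$ must be invoked before writing $EE=\mathrm{id}$.
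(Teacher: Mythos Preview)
The paper does not actually prove this theorem; it is quoted from \cite{inoue2011difference} and stated without argument, so there is nothing in the text to compare your proof against. That said, your proof is correct and is essentially the standard one found in the cluster-algebra literature: the single matrix $E=E_k$ with $E_{kk}=-1$ and $E_{ik}=[b_{ik}]_+$ for $i\neq k$ does realize both $B'=EBE^{T}$ and $P'=E^{T}PE$ under the sign conventions used here, and the involutivity $E^{2}=\mathrm{id}_{I}$ then collapses $P'B'$ to $E^{T}(PB)E^{T}$ exactly as you write. Your bookkeeping checks go through; the identity $[b_{ik}]_+b_{kj}+b_{ik}[-b_{kj}]_+=b_{ik}[b_{kj}]_+ +[-b_{ik}]_+b_{kj}$ (an instance of $[a]_+-[-a]_+=a$) is the only nontrivial step in matching $EBE^{T}$ to the stated mutation rule, and the verification for $P'$ is a direct row/column computation.

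One small comment on your closing paragraph: the hypothesis that $B$ has no left inverse plays no role in the implication you are proving---as you note, it is there only to prevent the antecedent ``there exists a nontrivial skew-symmetric $P$ with $PB=O$'' from being vacuous. Your transpose argument correctly shows that for skew-symmetric $B$ a left inverse forces a right inverse and hence $PB=O\Rightarrow P=O$; but the converse direction you sketch (no right inverse $\Rightarrow$ nonzero left kernel) is not automatic in infinite dimensions, so it is best to leave that as motivation rather than part of the proof.
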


\section{Discrete KdV equation}

The discrete KdV equation is a well-known discrete integrable system. We will study the Hamiltonian structure of this equation in our sense by using some results about cluster algebra. 

\subsection{Quivers of the discrete KdV equation}

The discrete KdV equation is a two dimensional difference equation:
\begin{equation}\label{equ:kdv}
    \sigma_{l+1}^{n+1}=\frac{\sigma_{l+1}^n \sigma_{l-1}^{n+1}+\sigma_l^{n+1} \sigma_l^n}{\sigma_{l-1}^n}.
\end{equation}
Obviously, it can correspond some quiver and have the phenomenon just like somos-4. Okubo constructs a desired infinite quiver $G$\cite{okubo2013discrete}:

\begin{equation}\label{quiver:dkdv1}
    \begin{tikzcd}
	{\sigma_{-2}^{2}} & {\sigma_{-1}^{2}} & {\sigma_{0}^{2}} \\
	& {\sigma_{-1}^{1}} & {\sigma_{0}^{1}} & {\sigma_{1}^{1}} \\
	&& {\sigma_{0}^{0}} & {\sigma_{1}^{0}} & {\sigma_{2}^{0}} \\
	&&& {\sigma_{1}^{-1}} & {\sigma_{2}^{-1}} & {\sigma_{3}^{-1}}
	\arrow[from=3-3, to=2-3]
	\arrow[from=2-3, to=1-3]
	\arrow[from=2-3, to=2-2]
	\arrow[from=2-4, to=2-3]
	\arrow[from=3-4, to=2-4]
	\arrow[from=4-4, to=3-4]
	\arrow[from=3-4, to=3-3]
	\arrow[from=3-5, to=3-4]
	\arrow[from=4-5, to=3-5]
	\arrow[from=4-5, to=4-4]
	\arrow[from=2-2, to=1-2]
	\arrow[from=1-3, to=1-2]
	\arrow[from=1-2, to=1-1]
	\arrow[from=4-6, to=4-5]
	\arrow[from=3-5, to=4-4]
	\arrow[curve={height=12pt}, from=3-3, to=3-5]
	\arrow[curve={height=12pt}, from=2-2, to=2-4]
	\arrow[curve={height=12pt}, from=4-4, to=4-6]
	\arrow[curve={height=12pt}, from=1-1, to=1-3]
	\arrow[from=1-3, to=2-2]
	\arrow[from=1-2, to=2-3]
	\arrow[from=2-4, to=3-3]
	\arrow[from=2-3, to=3-4]
	\arrow[from=3-4, to=4-5]
\end{tikzcd}
\end{equation}

Inspired by the discrete KdV equation (\ref{equ:kdv}), we draw the quiver after mutation at vertex $\sigma_{0}^{0}$, replace ${\sigma_{0}^{0}}'$ by $\sigma_{2}^{1}$ and move it to the right position:

\[\begin{tikzcd}
	{\sigma_{-2}^{2}} & {\sigma_{-1}^{2}} & {\sigma_{0}^{2}} \\
	& {\sigma_{-1}^{1}} & {\sigma_{0}^{1}} & {\sigma_{1}^{1}} & {\sigma_{2}^{1}} \\
	&&& {\sigma_{1}^{0}} & {\sigma_{2}^{0}} \\
	&&& {\sigma_{1}^{-1}} & {\sigma_{2}^{-1}} & {\sigma_{3}^{-1}}
	\arrow[from=2-3, to=1-3]
	\arrow[from=2-3, to=2-2]
	\arrow["2"{description}, from=2-4, to=2-3]
	\arrow[from=3-4, to=2-4]
	\arrow[from=4-4, to=3-4]
	\arrow[from=4-5, to=3-5]
	\arrow[from=4-5, to=4-4]
	\arrow[from=2-2, to=1-2]
	\arrow[from=1-3, to=1-2]
	\arrow[from=1-2, to=1-1]
	\arrow[from=4-6, to=4-5]
	\arrow[from=3-5, to=4-4]
	\arrow[curve={height=12pt}, from=2-2, to=2-4]
	\arrow[curve={height=12pt}, from=4-4, to=4-6]
	\arrow[curve={height=12pt}, from=1-1, to=1-3]
	\arrow[from=1-3, to=2-2]
	\arrow[from=1-2, to=2-3]
	\arrow[from=3-4, to=4-5]
	\arrow[from=2-5, to=2-4]
	\arrow[curve={height=12pt}, from=2-3, to=2-5]
	\arrow[from=2-5, to=3-4]
	\arrow[from=3-5, to=2-5]
	\arrow[from=2-4, to=3-5]
\end{tikzcd}\]
We can see that the quiver after one mutation is no longer the initial one. But if we compose the mutation at $\{\sigma_{l}^{n} \in G | l+n=0\}$ and denote it by $\tilde{\mu_{0}}$, we have the new quiver $\tilde{\mu_{0}}(G)$:

\[\begin{tikzcd}
	{\sigma_{-1}^{2}} & {\sigma_{0}^{2}} & {\sigma_{1}^{2}} \\
	& {\sigma_{0}^{1}} & {\sigma_{1}^{1}} & {\sigma_{2}^{1}} \\
	&& {\sigma_{1}^{0}} & {\sigma_{2}^{0}} & {\sigma_{3}^{0}} \\
	&&& {\sigma_{2}^{-1}} & {\sigma_{3}^{-1}} & {\sigma_{4}^{-1}}
	\arrow[from=2-2, to=1-2]
	\arrow[from=2-3, to=2-2]
	\arrow[from=3-3, to=2-3]
	\arrow[from=4-4, to=3-4]
	\arrow[from=1-2, to=1-1]
	\arrow[from=4-5, to=4-4]
	\arrow[from=2-4, to=2-3]
	\arrow[curve={height=12pt}, from=2-2, to=2-4]
	\arrow[from=2-4, to=3-3]
	\arrow[from=3-4, to=2-4]
	\arrow[from=2-3, to=3-4]
	\arrow[from=1-3, to=1-2]
	\arrow[curve={height=12pt}, from=1-1, to=1-3]
	\arrow[from=2-3, to=1-3]
	\arrow[from=1-2, to=2-3]
	\arrow[from=1-3, to=2-2]
	\arrow[from=3-5, to=3-4]
	\arrow[from=3-4, to=3-3]
	\arrow[from=4-5, to=3-5]
	\arrow[from=3-4, to=4-5]
	\arrow[from=3-5, to=4-4]
	\arrow[from=4-6, to=4-5]
	\arrow[curve={height=12pt}, from=4-4, to=4-6]
	\arrow[curve={height=12pt}, from=3-3, to=3-5]
\end{tikzcd}\]

$\tilde{\mu_{0}}(G)$ has the same structure as $G$, except that the horizontal coordinate of each variable adds one. Hence we say the quiver (\ref{quiver:dkdv1}) is a quiver corresponding to the discrete KdV equation.

\begin{remark}
    We can also compose the mutation at $\{\sigma_{l}^{n} \in G | l+n=2\}$ of $G$, then we will obtain the quiver where the horizontal coordinate of each variable minus one. 
\end{remark}

Recall the Hirota-Miwa equation:
\begin{equation}\label{equ:hirota-miwa}
        \sideset{^{n}}{^{m+1}_{l}}{\mathop{\tau}}  \sideset{^{n+1}}{^{m}_{l+1}}{\mathop{\tau}}=\sideset{^{n+1}}{^{m}_{l}}{\mathop{\tau}}  \sideset{^{n}}{^{m+1}_{l+1}}{\mathop{\tau}}+\sideset{^{n}}{^{m}_{l+1}}{\mathop{\tau}} \sideset{^{n+1}}{^{m+1}_{l}}{\mathop{\tau}}.
\end{equation}
Consider the quiver (\ref{quiver:hirota-miwa1})\cite{okubo2013discrete}. The variable mutation formula at any point in $\{\sideset{^{n}}{^{m}_{l}}{\mathop{\tau}}\vert l+m+n=-1 \}$ is the equation (\ref{equ:hirota-miwa}). After the composition of mutations at $\{\sideset{^{n}}{^{m}_{l}}{\mathop{\tau}}\vert l+m+n=-1 \}$, the resulting quiver has the same structure as (\ref{quiver:hirota-miwa1}). Hence the quiver (\ref{quiver:hirota-miwa1}) is a quiver corresponding to Hirota-Miwa equation.

\begin{equation}\label{quiver:hirota-miwa1}
    \resizebox{\textwidth}{!}{
\begin{tikzcd}[ampersand replacement=\&]
	\&\&\&\&\& {\sideset{^{1}}{^{1}_{-1}}{\mathop{\tau}}} \& {\sideset{^{1}}{^{1}_{0}}{\mathop{\tau}}} \& {\sideset{^{1}}{^{1}_{1}}{\mathop{\tau}}} \\
	\&\& {\sideset{^{2}}{^{0}_{-1}}{\mathop{\tau}}} \& {\sideset{^{2}}{^{0}_{-1}}{\mathop{\tau}}} \&\&\& {\sideset{^{0}}{^{1}_{0}}{\mathop{\tau}}} \& {\sideset{^{0}}{^{1}_{1}}{\mathop{\tau}}} \& {\sideset{^{0}}{^{1}_{2}}{\mathop{\tau}}} \\
	\&\& {\sideset{^{1}}{^{0}_{-2}}{\mathop{\tau}}} \& {\sideset{^{1}}{^{0}_{-1}}{\mathop{\tau}}} \& {\sideset{^{1}}{^{0}_{0}}{\mathop{\tau}}} \&\&\& {\sideset{^{-1}}{^{1}_{1}}{\mathop{\tau}}} \& {\sideset{^{-1}}{^{1}_{2}}{\mathop{\tau}}} \& {\sideset{^{-1}}{^{1}_{3}}{\mathop{\tau}}} \\
	{\sideset{^{2}}{^{-1}_{-2}}{\mathop{\tau}}} \&\&\& {\sideset{^{0}}{^{0}_{-1}}{\mathop{\tau}}} \& {\sideset{^{0}}{^{0}_{0}}{\mathop{\tau}}} \& {\sideset{^{0}}{^{0}_{1}}{\mathop{\tau}}} \&\&\& {\sideset{^{-2}}{^{1}_{2}}{\mathop{\tau}}} \& {\sideset{^{-2}}{^{1}_{3}}{\mathop{\tau}}} \\
	{\sideset{^{1}}{^{-1}_{-2}}{\mathop{\tau}}} \& {\sideset{^{1}}{^{-1}_{-1}}{\mathop{\tau}}} \&\&\& {\sideset{^{-1}}{^{0}_{0}}{\mathop{\tau}}} \& {\sideset{^{-1}}{^{0}_{1}}{\mathop{\tau}}} \& {\sideset{^{-1}}{^{0}_{2}}{\mathop{\tau}}} \&\&\& {\sideset{^{-3}}{^{1}_{3}}{\mathop{\tau}}} \\
	{\sideset{^{0}}{^{-1}_{-2}}{\mathop{\tau}}} \& {\sideset{^{0}}{^{-1}_{-1}}{\mathop{\tau}}} \& {\sideset{^{0}}{^{-1}_{0}}{\mathop{\tau}}} \&\&\& {\sideset{^{-2}}{^{0}_{1}}{\mathop{\tau}}} \& {\sideset{^{-2}}{^{0}_{2}}{\mathop{\tau}}} \& {\sideset{^{-2}}{^{0}_{3}}{\mathop{\tau}}} \\
	\& {\sideset{^{-1}}{^{-1}_{-1}}{\mathop{\tau}}} \& {\sideset{^{-1}}{^{-1}_{0}}{\mathop{\tau}}} \& {\sideset{^{-1}}{^{-1}_{1}}{\mathop{\tau}}} \&\&\& {\sideset{^{-3}}{^{0}_{2}}{\mathop{\tau}}} \& {\sideset{^{-3}}{^{0}_{3}}{\mathop{\tau}}} \\
	\&\& {\sideset{^{-2}}{^{-1}_{0}}{\mathop{\tau}}} \& {\sideset{^{-2}}{^{-1}_{1}}{\mathop{\tau}}} \& {\sideset{^{-2}}{^{-1}_{2}}{\mathop{\tau}}}
	\arrow[from=1-7, to=1-6]
	\arrow[from=1-8, to=1-7]
	\arrow[from=2-8, to=1-8]
	\arrow[from=2-9, to=2-8]
	\arrow[from=2-7, to=1-7]
	\arrow[from=2-8, to=2-7]
	\arrow[from=3-9, to=2-9]
	\arrow[from=3-8, to=2-8]
	\arrow[from=3-9, to=3-8]
	\arrow[from=1-7, to=2-8]
	\arrow[from=2-8, to=3-9]
	\arrow[from=4-6, to=3-8]
	\arrow[from=2-7, to=4-6]
	\arrow[from=3-5, to=2-7]
	\arrow[from=1-6, to=3-5]
	\arrow[from=2-4, to=1-6]
	\arrow[from=3-4, to=2-4]
	\arrow[from=3-5, to=3-4]
	\arrow[from=4-6, to=4-5]
	\arrow[from=4-5, to=3-5]
	\arrow[from=4-4, to=3-4]
	\arrow[from=4-5, to=4-4]
	\arrow[from=5-6, to=4-6]
	\arrow[from=5-6, to=5-5]
	\arrow[from=5-5, to=4-5]
	\arrow[from=5-7, to=5-6]
	\arrow[from=3-8, to=5-7]
	\arrow[from=3-4, to=4-5]
	\arrow[from=4-5, to=5-6]
	\arrow[from=4-4, to=6-3]
	\arrow[from=6-3, to=5-5]
	\arrow[from=5-5, to=7-4]
	\arrow[from=3-4, to=3-3]
	\arrow[from=3-3, to=2-3]
	\arrow[from=2-4, to=2-3]
	\arrow[from=3-3, to=5-2]
	\arrow[from=5-2, to=4-4]
	\arrow[from=6-2, to=5-2]
	\arrow[from=6-3, to=6-2]
	\arrow[from=7-3, to=6-3]
	\arrow[from=7-3, to=7-2]
	\arrow[from=7-2, to=6-2]
	\arrow[from=7-4, to=7-3]
	\arrow[from=4-9, to=3-9]
	\arrow[from=4-9, to=6-8]
	\arrow[from=5-7, to=4-9]
	\arrow[from=6-7, to=5-7]
	\arrow[from=6-7, to=6-6]
	\arrow[from=6-6, to=5-6]
	\arrow[from=6-8, to=6-7]
	\arrow[from=7-7, to=6-7]
	\arrow[from=7-8, to=7-7]
	\arrow[from=7-8, to=6-8]
	\arrow[from=7-4, to=6-6]
	\arrow[from=8-4, to=7-4]
	\arrow[from=8-5, to=8-4]
	\arrow[from=6-6, to=8-5]
	\arrow[from=8-5, to=7-7]
	\arrow[from=8-3, to=7-3]
	\arrow[from=8-4, to=8-3]
	\arrow[from=6-2, to=7-3]
	\arrow[from=7-3, to=8-4]
	\arrow[from=2-3, to=3-4]
	\arrow[from=5-6, to=6-7]
	\arrow[from=6-7, to=7-8]
	\arrow[from=5-2, to=5-1]
	\arrow[from=6-2, to=6-1]
	\arrow[from=6-1, to=5-1]
	\arrow[from=5-1, to=6-2]
	\arrow[from=3-10, to=3-9]
	\arrow[from=4-10, to=3-10]
	\arrow[from=4-10, to=4-9]
	\arrow[from=3-9, to=4-10]
	\arrow[from=5-1, to=4-1]
	\arrow[from=4-1, to=3-3]
	\arrow[from=5-10, to=4-10]
	\arrow[from=6-8, to=5-10]
\end{tikzcd}}
\end{equation}

By imposing the condition $\sideset{^{n}}{^{m}_{l}}{\mathop{\tau}}=\sideset{^{n}}{^{m+1}_{l+1}}{\mathop{\tau}}, \sigma_{l}^{n}:=\sideset{^{n}}{^{0}_{l}}{\mathop{\tau}}$, Hirota-Miwa equation can be reduced to discrete KdV equation. Meanwhile, we can reduce the quiver (\ref{quiver:hirota-miwa1}) in $(l,m,n)=(1,1,0)$ direction to obtain quiver (\ref{quiver:dkdv1}).

It is interesting that the quiver corresponding to a difference equation is not unique. We give another Hirota-Miwa quiver:

\resizebox{\textwidth}{!}{
\begin{tikzcd}[ampersand replacement=\&]
    \&\&\&\& {\sideset{^{2}}{^{1}_{-4}}{\mathop{\tau}}} \& {\sideset{^{2}}{^{1}_{-3}}{\mathop{\tau}}} \& {\sideset{^{2}}{^{1}_{-2}}{\mathop{\tau}}} \& {\sideset{^{2}}{^{1}_{-1}}{\mathop{\tau}}} \\
	\& {\sideset{^{3}}{^{0}_{-5}}{\mathop{\tau}}} \& {\sideset{^{3}}{^{0}_{-4}}{\mathop{\tau}}} \&\&\&\& {\sideset{^{1}}{^{1}_{-2}}{\mathop{\tau}}} \& {\sideset{^{1}}{^{1}_{-1}}{\mathop{\tau}}} \& {\sideset{^{1}}{^{1}_{0}}{\mathop{\tau}}} \& {\sideset{^{1}}{^{1}_{1}}{\mathop{\tau}}} \\
	\& {\sideset{^{2}}{^{0}_{-5}}{\mathop{\tau}}} \& {\sideset{^{2}}{^{0}_{-4}}{\mathop{\tau}}} \& {\sideset{^{2}}{^{0}_{-3}}{\mathop{\tau}}} \& {\sideset{^{2}}{^{0}_{-2}}{\mathop{\tau}}} \&\&\&\& {\sideset{^{0}}{^{1}_{0}}{\mathop{\tau}}} \& {\sideset{^{0}}{^{1}_{1}}{\mathop{\tau}}} \& {\sideset{^{0}}{^{1}_{2}}{\mathop{\tau}}} \& {\sideset{^{0}}{^{1}_{3}}{\mathop{\tau}}} \\
	\&\&\& {\sideset{^{1}}{^{0}_{-3}}{\mathop{\tau}}} \& {\sideset{^{1}}{^{0}_{-2}}{\mathop{\tau}}} \& {\sideset{^{1}}{^{0}_{-1}}{\mathop{\tau}}} \& {\sideset{^{1}}{^{0}_{0}}{\mathop{\tau}}} \&\&\&\& {\sideset{^{-1}}{^{1}_{2}}{\mathop{\tau}}} \& {\sideset{^{-1}}{^{1}_{3}}{\mathop{\tau}}} \\
	{\sideset{^{2}}{^{-1}_{-4}}{\mathop{\tau}}} \& {\sideset{^{2}}{^{-1}_{-3}}{\mathop{\tau}}} \&\&\&\& {\sideset{^{0}}{^{0}_{-1}}{\mathop{\tau}}} \& {\sideset{^{0}}{^{0}_{0}}{\mathop{\tau}}} \& {\sideset{^{0}}{^{0}_{1}}{\mathop{\tau}}} \& {\sideset{^{0}}{^{0}_{2}}{\mathop{\tau}}} \\
	{\sideset{^{1}}{^{-1}_{-4}}{\mathop{\tau}}} \& {\sideset{^{1}}{^{-1}_{-3}}{\mathop{\tau}}} \& {\sideset{^{1}}{^{-1}_{-2}}{\mathop{\tau}}} \& {\sideset{^{1}}{^{-1}_{-1}}{\mathop{\tau}}} \&\&\&\& {\sideset{^{-1}}{^{0}_{1}}{\mathop{\tau}}} \& {\sideset{^{-1}}{^{0}_{2}}{\mathop{\tau}}} \& {\sideset{^{-1}}{^{0}_{3}}{\mathop{\tau}}} \& {\sideset{^{-1}}{^{0}_{4}}{\mathop{\tau}}} \\
	\&\& {\sideset{^{0}}{^{-1}_{-2}}{\mathop{\tau}}} \& {\sideset{^{0}}{^{-1}_{-1}}{\mathop{\tau}}} \& {\sideset{^{0}}{^{-1}_{0}}{\mathop{\tau}}} \& {\sideset{^{0}}{^{-1}_{1}}{\mathop{\tau}}} \&\&\&\& {\sideset{^{-2}}{^{0}_{3}}{\mathop{\tau}}} \& {\sideset{^{-2}}{^{0}_{4}}{\mathop{\tau}}} \\
	\&\&\&\& {\sideset{^{-1}}{^{-1}_{0}}{\mathop{\tau}}} \& {\sideset{^{-1}}{^{-1}_{1}}{\mathop{\tau}}} \& {\sideset{^{-1}}{^{-1}_{2}}{\mathop{\tau}}} \& {\sideset{^{-1}}{^{-1}_{3}}{\mathop{\tau}}}
	\arrow[from=2-9, to=2-8]
	\arrow[from=2-10, to=2-9]
	\arrow[from=3-10, to=2-10]
	\arrow[from=3-11, to=3-10]
	\arrow[from=3-9, to=2-9]
	\arrow[from=3-10, to=3-9]
	\arrow[from=2-9, to=3-10]
	\arrow[from=3-9, to=5-8]
	\arrow[from=4-7, to=3-9]
	\arrow[from=2-8, to=4-7]
	\arrow[from=4-7, to=4-6]
	\arrow[from=5-8, to=5-7]
	\arrow[from=5-7, to=4-7]
	\arrow[from=5-6, to=4-6]
	\arrow[from=5-7, to=5-6]
	\arrow[from=4-6, to=5-7]
	\arrow[from=4-6, to=4-5]
	\arrow[from=2-8, to=1-8]
	\arrow[from=1-8, to=1-7]
	\arrow[from=4-5, to=3-5]
	\arrow[from=4-11, to=3-11]
	\arrow[from=4-5, to=4-4]
	\arrow[from=3-5, to=3-4]
	\arrow[from=4-4, to=3-4]
	\arrow[from=4-4, to=6-3]
	\arrow[from=6-3, to=6-2]
	\arrow[from=6-2, to=6-1]
	\arrow[from=6-2, to=5-2]
	\arrow[from=5-2, to=5-1]
	\arrow[from=6-1, to=5-1]
	\arrow[from=5-2, to=4-4]
	\arrow[from=3-4, to=3-3]
	\arrow[from=3-3, to=2-3]
	\arrow[from=2-3, to=2-2]
	\arrow[from=3-3, to=5-2]
	\arrow[from=3-4, to=4-5]
	\arrow[from=5-1, to=6-2]
	\arrow[from=5-7, to=4-7]
	\arrow[from=6-9, to=6-8]
	\arrow[from=6-8, to=5-8]
	\arrow[from=6-9, to=5-9]
	\arrow[from=3-3, to=3-2]
	\arrow[from=3-2, to=2-2]
	\arrow[from=6-4, to=6-3]
	\arrow[from=7-4, to=6-4]
	\arrow[from=7-3, to=6-3]
	\arrow[from=7-4, to=7-3]
	\arrow[from=7-5, to=7-4]
	\arrow[from=8-5, to=7-5]
	\arrow[from=7-6, to=7-5]
	\arrow[from=8-6, to=7-6]
	\arrow[from=8-6, to=8-5]
	\arrow[from=8-7, to=8-6]
	\arrow[from=8-8, to=8-7]
	\arrow[from=3-2, to=5-1]
	\arrow[from=5-1, to=3-3]
	\arrow[from=6-3, to=4-5]
	\arrow[from=4-5, to=6-4]
	\arrow[from=6-4, to=5-6]
	\arrow[from=5-6, to=7-5]
	\arrow[from=7-5, to=5-7]
	\arrow[from=5-7, to=7-6]
	\arrow[from=7-6, to=6-8]
	\arrow[from=6-8, to=8-7]
	\arrow[from=2-3, to=1-5]
	\arrow[from=1-5, to=3-4]
	\arrow[from=3-4, to=1-6]
	\arrow[from=1-6, to=3-5]
	\arrow[from=1-6, to=1-5]
	\arrow[from=1-7, to=1-6]
	\arrow[from=4-6, to=2-8]
	\arrow[from=2-7, to=4-6]
	\arrow[from=3-5, to=2-7]
	\arrow[from=2-7, to=1-7]
	\arrow[from=2-8, to=2-7]
	\arrow[from=6-3, to=7-4]
	\arrow[from=7-5, to=8-6]
	\arrow[from=5-9, to=5-8]
	\arrow[from=5-8, to=3-10]
	\arrow[from=3-10, to=5-9]
	\arrow[from=5-9, to=4-11]
	\arrow[from=6-10, to=6-9]
	\arrow[from=7-10, to=6-10]
	\arrow[from=6-11, to=6-10]
	\arrow[from=6-10, to=4-11]
	\arrow[from=7-11, to=7-10]
	\arrow[from=7-11, to=6-11]
	\arrow[from=8-7, to=6-9]
	\arrow[from=6-9, to=8-8]
	\arrow[from=8-8, to=7-10]
	\arrow[from=4-12, to=4-11]
	\arrow[from=4-12, to=3-12]
	\arrow[from=3-12, to=3-11]
	\arrow[from=6-10, to=4-12]
	\arrow[from=4-12, to=6-11]
	\arrow[from=2-2, to=3-3]
	\arrow[from=5-8, to=6-9]
	\arrow[from=6-10, to=7-11]
	\arrow[from=1-7, to=2-8]
	\arrow[from=3-11, to=4-12]
\end{tikzcd}}

 This quiver can also be reduced in $(l,m,n)=(1,1,0)$ direction, then we obtain another quiver corresponding to (\ref{equ:kdv}), which can be further reduced to the somos-4 quiver (\ref{qui:somos4}):

\begin{equation}\label{quiver:dkdv2}
    \begin{tikzcd}
	{\sigma_{-2}^{2}} & {\sigma_{-1}^{2}} \\
	{\sigma_{-2}^{1}} & {\sigma_{-1}^{1}} & {\sigma_{0}^{1}} & {\sigma_{1}^{1}} \\
	&& {\sigma_{0}^{0}} & {\sigma_{1}^{0}} & {\sigma_{2}^{0}} & {\sigma_{3}^{0}} \\
	&&&& {\sigma_{2}^{-1}} & {\sigma_{3}^{-1}} & {\sigma_{4}^{-1}}
	\arrow[from=2-4, to=2-3]
	\arrow[from=3-4, to=2-4]
	\arrow[from=4-5, to=3-5]
	\arrow[from=4-6, to=4-5]
	\arrow[from=3-6, to=3-5]
	\arrow["2"{description}, from=3-5, to=3-4]
	\arrow[from=4-6, to=3-6]
	\arrow[from=3-5, to=4-6]
	\arrow[from=3-6, to=4-5]
	\arrow["2"{description}, from=4-7, to=4-6]
	\arrow[curve={height=12pt}, from=4-5, to=4-7]
	\arrow[curve={height=12pt}, from=3-4, to=3-6]
	\arrow[from=3-4, to=3-3]
	\arrow[from=3-3, to=2-3]
	\arrow[from=2-4, to=3-3]
	\arrow[from=2-3, to=3-4]
	\arrow[curve={height=12pt}, from=3-3, to=3-5]
	\arrow["2"{description}, from=2-3, to=2-2]
	\arrow[from=2-2, to=1-2]
	\arrow[from=2-2, to=2-1]
	\arrow[from=1-2, to=1-1]
	\arrow[from=2-1, to=1-1]
	\arrow[from=1-2, to=2-1]
	\arrow[from=1-1, to=2-2]
	\arrow[curve={height=12pt}, from=2-1, to=2-3]
	\arrow[curve={height=12pt}, from=2-2, to=2-4]
\end{tikzcd}
\end{equation}

\subsection{The Poisson structure of the discrete KdV equation}

We need to recall some results in \cite{inoue2011difference} first, then we will give an assumption and calculate the Poisson structure of the discrete KdV equation.

Now we forget the $\sigma_{l}^{n}$ temporarily, and only pay attention to the structure of (\ref{quiver:dkdv1}). Let's mark the label of vertices again:

\begin{equation}\label{quiver:number}
    \begin{tikzcd}
	x_{-3} & x_{-2} & x_{-1} \\
	& x_{0} & x_{1} & x_{2} \\
	&& x_{3} & x_{4} & x_{5} \\
	&&& x_{6} & x_{7} & x_{8}
	\arrow[from=2-2, to=1-2]
	\arrow[from=2-3, to=2-2]
	\arrow[from=3-3, to=2-3]
	\arrow[from=4-4, to=3-4]
	\arrow[from=1-2, to=1-1]
	\arrow[from=4-5, to=4-4]
	\arrow[from=2-4, to=2-3]
	\arrow[curve={height=12pt}, from=2-2, to=2-4]
	\arrow[from=2-4, to=3-3]
	\arrow[from=3-4, to=2-4]
	\arrow[from=2-3, to=3-4]
	\arrow[from=1-3, to=1-2]
	\arrow[curve={height=12pt}, from=1-1, to=1-3]
	\arrow[from=2-3, to=1-3]
	\arrow[from=1-2, to=2-3]
	\arrow[from=1-3, to=2-2]
	\arrow[from=3-5, to=3-4]
	\arrow[from=3-4, to=3-3]
	\arrow[from=4-5, to=3-5]
	\arrow[from=3-4, to=4-5]
	\arrow[from=3-5, to=4-4]
	\arrow[from=4-6, to=4-5]
	\arrow[curve={height=12pt}, from=4-4, to=4-6]
	\arrow[curve={height=12pt}, from=3-3, to=3-5]
\end{tikzcd}
\end{equation}
and the translation is 
\begin{equation}\label{equ:trans}
    x_{i}\mapsto \sigma_{l}^{n}=\begin{cases}\sigma_{k}^{-k} & i=3k, \\ 
    \sigma_{k+1}^{-k} & i=3k +1, \quad\quad\quad \text{for  } k\in \mathbb{Z}.\\
    \sigma_{k+2}^{-k} & i=3k+2,\end{cases}
\end{equation}

the infinite dimensional skew-symmetric matrix $B=(b_{ij})_{i,j\in \mathbb{Z}}$ corresponding to (\ref{quiver:number}) is written as:

\begin{equation*}
\begin{aligned}
& b_{3 k, 3 k+i}=\delta_{i,-2}-\delta_{i,-1}-\delta_{i, 1}+\delta_{i, 2},\\
& b_{3 k+1,3 k+1+i}=-\delta_{i,-3}+\delta_{i,-2}+\delta_{i,-1}-\delta_{i, 1}-\delta_{i, 2}+\delta_{i, 3},\quad\quad i,k \in \mathbb{Z}.\\
& b_{3 k+2,3 k+2+i}=-\delta_{i,-2}+\delta_{i,-1}+\delta_{i, 1}-\delta_{i, 2},
\end{aligned}
\end{equation*}
Since $b_{j,3k}+b_{j,3k+1}+b_{j,3k+2}=0 $, $B$ has no left inverse. By the theorem \ref{thm:pb0}, we need to compute the infinite dimensional skew-symmetric matrix $P=(p_{ij})_{i,j\in \mathbb{Z}}$ satisfying $PB=O$. Define 3 by 3 submatrices of $P$: 
\begin{equation*}
    P(i,j)=\left(\begin{array}{ccc}
    p_{3i,3j} & p_{3i,3j +1} & p_{3i,3j+2}  \\
    p_{3i+1,3j} & p_{3i+1,3j+1} & p_{3i+1,3j+2}  \\
    p_{3i+2,3j} & p_{3i+2,3j+1} & p_{3i+2,3j+2}  
    \end{array}\right),\quad i,j \in \mathbb{Z}.
\end{equation*}

\begin{thm}\cite{inoue2011difference}\label{thm:inoue}
The general skew-symmetric matrix solution $P$ to $PB=O$ is given by:
\begin{equation}\label{equ:inoue}
\begin{array}{lll}
P(0,0) & =\left(\begin{array}{ccc}
0 & a_0 & b_0 \\
-a_0 & 0 & c_0 \\
-b_0 & -c_0 & 0
\end{array}\right) & a_0, b_0, c_0 \in \mathbb{Q}, \\
P(i, i) & =P(0,0)+Q_{0, i} S-S Q_{0, i} \quad\quad& i \neq 0, \\
P(i, j) & =P(i,i)+Q_{i, j} S & i<j, \\
P(j, i) & =-P(i, j)^T & i<j .
\end{array}
\end{equation}
where $S=\left(\begin{array}{ccc}
     1 & 1 & 1 \\
     1 & 1 & 1 \\
     1 & 1 & 1 
\end{array}\right)$ and $Q_{i,j}=\left(\begin{array}{ccc}
     q_{i,j} & 0 & 0 \\
     0 & q_{i,j}+a(i)-a(j) & 0 \\
     0 & 0 & q_{i,j}+b(i)-b(j)
\end{array}\right)$for two fixed maps $a,b : \mathbb{Z} \rightarrow \mathbb{Q}$ and $q_{i,j}=-q_{j,i}$.
\end{thm}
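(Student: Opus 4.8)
The plan is to exploit the $3$-periodicity of $B$ to turn the infinite system $PB=O$ into an almost trivial first-order condition, and then let skew-symmetry pin down the parametrisation. First I would group the index set into consecutive triples $\{3k,3k+1,3k+2\}$ and read off from the displayed formulas for $b_{ij}$ that, with the $3\times 3$ blocks $B(i,j)$ defined exactly as the $P(i,j)$ are, the matrix $B$ is block-tridiagonal with \emph{constant} blocks: $B(k,k)=B_0$, $B(k,k+1)=B_+$, $B(k,k-1)=B_-$, and all other blocks zero. A one-line computation gives these three matrices explicitly; the only two facts I will actually use are that each of $B_0,B_+,B_-$ has vanishing row sums and vanishing column sums, i.e. $B_\bullet(1,1,1)^\top = 0$ and $S\,B_\bullet = O$. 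With this, $PB=O$ becomes, for all $i,j$, the block recurrence $P(i,j-1)B_+ + P(i,j)B_0 + P(i,j+1)B_- = O$. Since this is right multiplication, the recurrence decouples over the individual rows of $P$: for a fixed matrix row, writing its block pieces as $\rho_j=(x_j,y_j,z_j)$, it reads $\rho_{j-1}B_+ + \rho_j B_0 + \rho_{j+1}B_- = 0$, i.e. every row of $P$ is a left null vector of $B$.

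Next I would solve this row recurrence. Expanding it component-wise, the first and third components state that $y_j-z_j$ and $x_j-y_j$ are each \emph{independent of $j$}, and the middle component is then automatically satisfied. Hence a left null vector is precisely a sequence whose two adjacent differences are globally constant, with a free ``level'' $y_j$ in each block, i.e. $\rho_j = y_j(1,1,1) + (\beta,0,-\alpha)$. Translating back, for every fixed matrix row $3i+a$ the triples $\bigl(p_{3i+a,3j},\,p_{3i+a,3j+1},\,p_{3i+a,3j+2}\bigr)$ have $j$-independent adjacent differences. This is the entire content of $PB=O$ \emph{before} symmetry is imposed.

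The relation $P(j,i)=-P(i,j)^\top$ is just the skew-symmetry of $P$, so it suffices to determine the blocks with $i\le j$. For $i=j$ the block is skew, contributing three parameters, which at the base point give $a_0,b_0,c_0$; I would then \emph{define} the maps $a,b:\mathbb{Z}\to\mathbb{Q}$ by reading off the $(0,1)$ and $(0,2)$ entries of $P(i,i)$, arranged so that $P(i,i)=P(0,0)+Q_{0,i}S-SQ_{0,i}$ (the unique skew combination of $S$-terms, since $(QS-SQ)^\top=-(QS-SQ)$). For $i<j$, the constancy of adjacent differences along each internal row, already fixed by $P(i,i)$, forces $P(i,j)$ to differ from $P(i,i)$ only by a multiple of $(1,1,1)$ in each row, i.e. $P(i,j)=P(i,i)+Q_{i,j}S$ with $Q_{i,j}$ diagonal; matching these constants \emph{across} the diagonal $j=i$ for the three internal rows is exactly what forces the diagonal entries of $Q_{i,j}$ to equal $q_{ij},\,q_{ij}+a(i)-a(j),\,q_{ij}+b(i)-b(j)$, and $q_{ij}=-q_{ji}$ then drops out of $P(j,i)=-P(i,j)^\top$. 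Conversely, substituting this form back into the recurrence and using $S\,B_\bullet=O$ together with the already-verified constancy of differences yields $PB=O$, which is the sufficiency direction.

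Steps one and two are essentially routine (the recurrence collapses to first order), so the real work is the third step. The hard part will be showing that the \emph{local} per-block freedom found in step two is cut down by the \emph{global} web of skew-symmetry constraints to exactly two potentials $a,b$ and one antisymmetric array $q_{ij}$ — in particular, that the demand ``adjacent differences are constant for all $j$, including across $j=i$'' is satisfiable precisely by potential-type data $a(i)-a(j)$ and $b(i)-b(j)$, and not by anything more general. I expect to organise this as an outward induction on $|i-j|$ from the diagonal, tracking carefully how the reflection $P(j,i)=-P(i,j)^\top$ feeds the lower-triangular entries back into the constancy conditions for rows with small first index; verifying that this matching is consistent rather than over-determined is the crux, and it is the one place where the precise numerology of $B_0,B_\pm$ is genuinely used.
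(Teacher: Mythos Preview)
The paper does not prove this theorem: it is quoted verbatim from \cite{inoue2011difference} (Inoue--Nakanishi) and used as a black box in the proof of the subsequent Theorem~4.2. So there is no ``paper's own proof'' to compare against.

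That said, your plan is a correct route to the result. The block decomposition is exactly right: $B$ is block-tridiagonal with constant $3\times 3$ blocks $B_-,B_0,B_+$, each with zero row and column sums, and the row recurrence does collapse to the two constancy conditions $x_j-y_j=\mathrm{const}$ and $y_j-z_j=\mathrm{const}$ for every fixed row of $P$. Your parametrisation of the diagonal blocks via $a,b$ and of the off-diagonal ones via $Q_{i,j}S$ is also the natural one, and the sufficiency check using $SB_\bullet=O$ is immediate.

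The one place to be careful is your ``hard part''. You need to show that the diagonal entries of $Q_{i,j}$ are forced to be of the form $q_{ij}$, $q_{ij}+a(i)-a(j)$, $q_{ij}+b(i)-b(j)$, i.e.\ that after subtracting a free antisymmetric $q_{ij}$ the remaining two corrections are \emph{additive} in $i$ and $j$ separately. This comes not from any cocycle identity in $(i,j)$ but simply from matching the constancy conditions across the diagonal: for the second internal row of block $i$, the constant $x_j-y_j$ (equivalently $p_{3i+1,3j}-p_{3i+1,3j+1}$) is read off from $P(i,i)$ and hence depends only on $i$, while skew-symmetry at $P(j,i)=-P(i,j)^\top$ ties it to the analogous constant for block $j$. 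Tracking this gives the potential form directly, with $a(i),b(i)$ defined from the $(0,1)$ and $(0,2)$ entries of $P(i,i)-P(0,0)$, and no induction on $|i-j|$ is really needed --- each off-diagonal block is determined in one step from $P(i,i)$ plus a single row-level freedom. Once you make that matching explicit the argument is complete.
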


The above Poisson structure still has a high degree of freedom, we will add more restrictions to it and get the general form of the Hamiltonian structure. 

One thing worth considering is that the Poisson structure defined on cluster algebra follows the quiver, it can only describe the Poisson bracket between variables in the same quiver. 
If we want to study the Poisson structure of the difference equation, we should allow the the bracket between variables which are in different quivers. Since the Poisson bracket we considered is log-canonical, we assume the Poisson structure of the difference equation is also log-canonical.

For $i,j \in \mathbb{Z}$, consider 
\begin{equation}\label{equ:coef}
    \begin{split}
        &\{x_i,\frac{x_{3j+1}x_{3j-1}+x_{3j+2}x_{3j-2}}{x_{3j}}\}\\
        =& (p_{i,3j+1}+p_{i,3j-1}-p_{i,3j})x_i \frac{x_{3j+1}x_{3j-1}}{x_{3j}}+(p_{i,3j+2}+p_{i,3j-2}-p_{i,3j})x_i \frac{x_{3j+2}x_{3j-2}}{x_{3j}}\\
        =& (\text{some coefficient})x_i\frac{x_{3j+1}x_{3j-1}+x_{3j+2}x_{3j-2}}{x_{3j}},
    \end{split}
\end{equation}
that means 
\begin{equation}\label{equ:pb01}
    p_{i,3j+1}+p_{i,3j-1}=p_{i,3j+2}+p_{i,3j-2}.
\end{equation}
Similarly, by considering the composition of mutation in the other direction, we get 
\begin{equation}\label{equ:pb02}
    p_{i,3j+1}+p_{i,3j+3}=p_{i,3j}+p_{i,3j+4}.
\end{equation}
In fact, equation (\ref{equ:pb01}) and (\ref{equ:pb02}) are equivalent to $PB=O$, which is also equivalent to $\{J_Z, \mathcal{R}\}\subseteq J_Z$ in definition \ref{definition}, so we can extend the log-canonical Poisson bracket to any variables in equation (\ref{equ:kdv}). Hence the Poisson bracket of a pair of variables may not only be calculated by extending the initial Poisson bracket, but also be given by the Poisson structure of some quiver. They are compatible. 

By the condition (\ref{traninvariant}), we require
\begin{equation}\label{equ:assumption}
    \{\Lambda_{l}\sigma_{l}^{n},\Lambda_{l}\sigma_{l'}^{n'}\}=\Lambda_{l}\{\sigma_{l}^{n},\sigma_{l'}^{n'}\}.
\end{equation}

Consider the Poisson structures satisfying the following relations:

\begin{equation}\label{equ:relation}
    \begin{array}{ll}
        p_{i,3j+1}+p_{i,3j-1}-p_{i,3j}=p_{i-1,3j-1}\quad\quad&i \equiv 1,2 \quad\text{mod } 3   \\
        p_{i,3j+1}+p_{i,3j+3}-p_{i,3j+2}=p_{i+1,3j+3} &  i \equiv 0,1 \quad\text{mod } 3   \\
        p_{i,j}=p_{i-1,j-1} & i,j \not\equiv 0 \quad\text{mod } 3 
    \end{array}
\end{equation}
Define function
\begin{equation*}
	F(\{\sigma_{l}^{n},\sigma_{l'}^{n'}\})=\frac{\{\sigma_{l}^{n},\sigma_{l'}^{n'}\}}{\sigma_{l}^{n}\sigma_{l'}^{n'}},
\end{equation*}
which means the coefficients of log-canonical Poisson bracket.

The first relation in (\ref{equ:relation}) comes from (\ref{equ:coef}) and that 
\begin{equation*}
	\begin{split}
		&F(\{x_i,\frac{x_{3j+1}x_{3j-1}+x_{3j+2}x_{3j-2}}{x_{3j}}\})\\
		\overset{ (\ref{equ:trans})}{=}&F(\{\sigma_{l}^{n},\Tilde{\mu_0}(\sigma_{k}^{-k})\})=F(\{\sigma_{l}^{n},\sigma_{k+2}^{-k+1}\}) \\
		\overset{ (\ref{equ:assumption})}{=}&F(\{\sigma_{l-1}^{n},\sigma_{k+1}^{-k+1}\})\overset{(\ref{equ:trans})}{=}F(\{x_{i-1}, x_{3j-1}\}).
	\end{split}
\end{equation*}
Similarly, the second relation comes from the composition of mutation in the other direction. The third relation is (\ref{equ:assumption}) in the case that there are two quivers in which $\sigma_{l}^{n},\sigma_{l'}^{n'}$ both exist.

\begin{thm}
    A Poisson bracket is a Hamiltonian structure of the discrete KdV equation (\ref{equ:kdv}) if and only if the Poisson bracket is given by:
\begin{equation}\label{equ:poisson}
    \{\sigma_{l}^{n},\sigma_{l'}^{n'}\}=
        (q_{n-n'}+a_0(n'+l'-n-l))\sigma_{l}^{n}\sigma_{l'}^{n'},
\end{equation}
where $a_0 \in \mathbb{Q}$ and $q_{k} \in \mathbb{Q}$ satisfying $q_{k}=-q_{-k}$. This Poisson bracket gives a bihamiltonian structure of the discrete KdV equation.
\end{thm}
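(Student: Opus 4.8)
The plan is to prove the stated ``if and only if'' by analyzing a single system of relations, and then to read off the bihamiltonian statement as an immediate consequence of linearity. The organizing observation, already assembled in the discussion preceding the theorem, is that a translation-invariant log-canonical bracket is a Hamiltonian structure of (\ref{equ:kdv}) precisely when its coefficient matrix satisfies $PB=O$ (equivalently the closure relations (\ref{equ:pb01}) and (\ref{equ:pb02})) \emph{together} with the translation-invariance constraint (\ref{equ:assumption}); combined, these are exactly the relations (\ref{equ:relation}). So both directions reduce to understanding the solutions of (\ref{equ:relation}).

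For necessity I would start from the general solution of $PB=O$ furnished by Theorem \ref{thm:inoue}, which depends on the block data $a_0,b_0,c_0$, the two maps $a,b:\mathbb{Z}\to\mathbb{Q}$, and the antisymmetric array $q_{i,j}$, and then impose (\ref{equ:relation}). The diagonal relations (the first two lines of (\ref{equ:relation})) should force the central block to collapse to $b_0=2a_0$ and $c_0=a_0$, which one checks against $P(0,0)$: on the initial anti-diagonal the entries $a_0,b_0,c_0$ become $a_0,2a_0,a_0$, exactly matching $a_0((n'+l')-(n+l))$. The same relations should fix the slopes of $a,b$ in terms of $a_0$, while the third relation $p_{i,j}=p_{i-1,j-1}$ forces $q_{i,j}$ to depend only on the block difference, i.e.\ on $n-n'$. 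Translating back through (\ref{equ:trans}) and propagating from the initial strip to all cluster variables by (\ref{equ:assumption}) then assembles the coefficient into the closed form $q_{n-n'}+a_0(n'+l'-n-l)$. For sufficiency I would substitute this expression into (\ref{equ:relation}) directly; translation invariance is manifest because the coefficient depends only on the invariant differences $n-n'$ and $(n+l)-(n'+l')$, and the Jacobi identity holds automatically for any log-canonical bracket since its cyclic sum cancels by skew-symmetry of $P$.

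For the bihamiltonian claim I would observe that the family of Hamiltonian structures just characterized is a $\mathbb{Q}$-vector space, parametrized linearly by $a_0\in\mathbb{Q}$ and the antisymmetric sequence $(q_k)$ with $q_0=0$. Choosing any two independent members $\{\ ,\}_1$ and $\{\ ,\}_2$ of this space, every pencil element $\{\ ,\}_\lambda=\{\ ,\}_2-\lambda\{\ ,\}_1$ again has the form (\ref{equ:poisson}), now with $a_0$ and $q_k$ replaced by the corresponding $\mathbb{Q}$-linear combinations; hence it is again a Hamiltonian structure by the first part, and by the definition of bihamiltonian structure the pair $(\{\ ,\}_1,\{\ ,\}_2)$ is a bihamiltonian structure.

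The hard part will be the necessity computation: solving the coupled system (\ref{equ:relation}) starting from the three-parameter block solution of Theorem \ref{thm:inoue}, while correctly converting between the $x_i$-labeling and the $\sigma_l^n$-labeling through (\ref{equ:trans}) and extending from the initial strip to all variables via $\tilde{\mu}_0$. The bookkeeping of the residues modulo $3$ and of how $\tilde{\mu}_0$ shifts the coordinates is where errors are most likely, and this is the step that must be executed carefully to confirm that the relations collapse all the block data down to exactly the two ingredients $a_0$ and $(q_k)$ that appear in (\ref{equ:poisson}).
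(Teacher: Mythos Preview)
Your plan coincides with the paper's own proof: start from the general solution of Theorem \ref{thm:inoue}, impose the relations (\ref{equ:relation}) to force $c_0=a_0$, $b_0=2a_0$, $a,b$ constant and $q_{i,j}=q_{i-j}$, then extend from the initial strip $0\le n+l,n'+l'\le 2$ to all variables by induction via the mutation/shift; sufficiency is a direct substitution and the bihamiltonian claim follows from linearity of the solution space. The only correction is in the bookkeeping you already flagged: it is the \emph{third} relation of (\ref{equ:relation}) (at $i=1,j=2$ and $i=1,2,\,j=3k+2,3k+1$) that yields $c_0=a_0$ and forces $a,b$ to be constant maps, while $b_0=2a_0$ comes from the \emph{first} relation at $i=1,j=0$.
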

\begin{proof}
	It is calculated directly that (\ref{equ:poisson}) satisfies the definition(\ref{definition}).

    On the other hand, by the theorem \ref{thm:inoue}, $P$ has the form (\ref{equ:inoue}).
    For the third relation in (\ref{equ:relation}), consider the following three cases in turn:
    \begin{itemize}
        \item If $i=1, j=2$, then $p_{1,2}=p_{0,1}, i.e.,a_0=c_0$. 
        \item If $i=1, j=3k+2 ,k \neq 0$, then $a_0=a_0+a(0)-a(k),i.e.,a$ is a constant map.
        \item If $i=2, j=3k+1 ,k \neq 0$, then $a_0=a_0+b(0)-b(k),i.e.,b$ is a constant map.
    \end{itemize}
    Hence $Q_{i,j}$ in (\ref{equ:inoue}) is $q_{i,j}id$. By the translation invariant (\ref{traninvariant}), we know $Q_{i,j}$ should be $q_{i-j}id$.

    Consider the first relation in (\ref{equ:relation}) when $i=1, j=0$. We have $0+ q_{0,-1}+a_0 -(-a_0)= q_{0,-1}+b_0$, hence $b_0 = 2 a_0$.
    
    By translation (\ref{equ:trans}), we get that (\ref{equ:poisson}) holds in the case $0\leq n+l,n'+l' \leq 2$. Assume that (\ref{equ:poisson}) holds in the case $0\leq n+l,n'+l' \leq k$ and use induction on $k$. 
    
    Consider $\{\sigma_{l_1}^{n_1},\sigma_{l_2}^{n_2}\}, 0\leq n_1+l_1,n_2+l_2 \leq k+1$.
    \begin{itemize}
        \item If both $ n_1+l_1,n_2+l_2 \leq k$, by assumption,  (\ref{equ:poisson}) holds.
        \item If $0 \leq n_1+l_1 \leq k,n_2+l_2=k+1$, 
        \begin{equation}\label{equ:induction}
        \begin{split}
            &\{\sigma_{l_1}^{n_1},\sigma_{l_2}^{n_2} \}= \{\sigma_{l_1}^{n_1},\frac{\sigma_{l_2-1}^{n_2}\sigma_{l_2-1}^{n_2-1}+\sigma_{l_2-2}^{n_2}\sigma_{l_2}^{n_2-1}}{\sigma_{l_2-2}^{n_2-1}}\}\\
            =&(q_{n_1-n_2}+a_0(n_2+l_2-1-n_1-l_1)+q_{n_1- n_2+1}+a_0(n_2-1+l_2-1-n_1-l_1)\\
            &-q_{n_1- n_2+1}-a_0(n_2-1+l_2-2-n_1-l_1))\sigma_{l_1}^{n_1}\frac{\sigma_{l_2-1}^{n_2}\sigma_{l_2-1}^{n_2-1}+\sigma_{l_2-2}^{n_2}\sigma_{l_2}^{n_2-1}}{\sigma_{l_2-2}^{n_2-1}}\\
            =&(q_{n_1-n_2}+a_0(n_2 +l_2 -n_1-l_1))\sigma_{l_1}^{n_1}\sigma_{l_2}^{n_2}
        \end{split}
        \end{equation}
        \item If both $ n_1+l_1,n_2+l_2=k+1$, the proof is similar to (\ref{equ:induction}). Specially we need to use the results of (\ref{equ:induction}).
    \end{itemize}
    By induction, we get that (\ref{equ:poisson}) holds in the case $ n+l,n'+l' \geq 0$. The proof for the negative direction of the real axis is similar.
    
\end{proof}
\renewcommand\qedsymbol{QED}

\begin{remark}
	The Poisson bracket (\ref{equ:poisson}) was also obtained by Inoue and  Nakanishi in \cite{inoue2011difference} by using a certain symmetry condition on the Poisson bracket \eqref{equ:inoue}.
\end{remark}

\begin{remark}
	The Poisson bracket (\ref{equ:poisson}) can be simplified:
	\begin{equation*}
		\{\sigma_{l}^{n},\sigma_{l'}^{n'}\}=
		(q_{n-n'}+a_0(l'-l))\sigma_{l}^{n}\sigma_{l'}^{n'},
	\end{equation*}
	where $a_0 \in \mathbb{Q}$ and $q_{k} \in \mathbb{Q}$ satisfying $q_{k}=-q_{-k}$.
\end{remark}

Note that we choose the quiver (\ref{quiver:dkdv1}) corresponding to the discrete KdV equation then derive the Hamiltonian structure (\ref{equ:poisson}). If we choose the quiver (\ref{quiver:dkdv2}), we can also get (\ref{equ:poisson}). In fact, if a quiver can be obtained from the (\ref{quiver:dkdv1}) with a multiplicity of mutations, according to compatibility the Hamiltonian structure calculated from this quiver is still (\ref{equ:poisson}), that is, the form of the Hamiltonian structure is independent of the quiver.

\section{Other reductions of Hirota-Miwa equation}
In subsection 4.1, we know that the quiver of the discrete KdV equation can be obtained from the quiver of Hirota-Miwa equation. We will write some principles for constructing quivers of (\ref{equ:hirota-miwa}) and give some interesting examples.

The first principle is the quiver of (\ref{equ:hirota-miwa}) should contain the following two structures:
\begin{equation*}
    \resizebox{\textwidth}{!}{
    \begin{tikzcd}[ampersand replacement=\&]
	\&\& {\sideset{^{n}}{^{m+1}_{l-1}}{\mathop{\tau}}} \&\&\&\&\& {\sideset{^{n+1}}{^{m}_{l}}{\mathop{\tau}}} \\
	\&\&\& {\sideset{^{n-1}}{^{m+1}_{l}}{\mathop{\tau}}} \&\&\&\& {\sideset{^{n}}{^{m}_{l}}{\mathop{\tau}}} \& {\sideset{^{n}}{^{m}_{l+1}}{\mathop{\tau}}} \\
	{\sideset{^{n}}{^{m}_{l-1}}{\mathop{\tau}}} \& {\sideset{^{n}}{^{m}_{l}}{\mathop{\tau}}} \&\&\& \text{and} \& {\sideset{^{n+1}}{^{m-1}_{l}}{\mathop{\tau}}} \\
	\& {\sideset{^{n-1}}{^{m}_{l}}{\mathop{\tau}}} \&\&\&\&\& {\sideset{^{n}}{^{m-1}_{l+1}}{\mathop{\tau}}}
	\arrow[from=3-2, to=3-1]
	\arrow[from=4-2, to=3-2]
	\arrow[from=1-3, to=3-2]
	\arrow[from=3-2, to=2-4]
	\arrow[from=2-9, to=2-8]
	\arrow[from=2-8, to=1-8]
	\arrow[from=2-8, to=4-7]
	\arrow[from=3-6, to=2-8]
\end{tikzcd}}.
\end{equation*}
This principle guarantees the variables mutation formula is equation (\ref{equ:hirota-miwa}).

The second principle is periodic, in other word, the quiver has the same structure after the given composition of variable mutations. And we also request that each variable $\sideset{^{n}}{^{m}_{l}}{\mathop{\tau}}$ will appear at the position where it should be mutated in some quiver. 

We give two other reductions of Hirota-Miwa equation.
\begin{eg}
Given the following quiver of equation (\ref{equ:hirota-miwa}) and the composition of mutations at $\{\sideset{^{n}}{^{m}_{l}}{\mathop{\tau}}|n-2m+l=-2\}$.

\resizebox{\textwidth}{!}{
    \begin{tikzcd}[ampersand replacement=\&]
	\&\&\&\&\&\& {\sideset{^{2}}{^{1}_{-2}}{\mathop{\tau}}} \& {\sideset{^{2}}{^{1}_{-1}}{\mathop{\tau}}} \& {\sideset{^{2}}{^{1}_{0}}{\mathop{\tau}}} \& {\sideset{^{2}}{^{1}_{1}}{\mathop{\tau}}} \\
	\& {\sideset{^{3}}{^{0}_{-5}}{\mathop{\tau}}} \& {\sideset{^{3}}{^{0}_{-4}}{\mathop{\tau}}} \& {\sideset{^{3}}{^{0}_{-3}}{\mathop{\tau}}} \& {\sideset{^{3}}{^{0}_{-2}}{\mathop{\tau}}} \&\&\& {\sideset{^{1}}{^{1}_{-1}}{\mathop{\tau}}} \& {\sideset{^{1}}{^{1}_{0}}{\mathop{\tau}}} \& {\sideset{^{1}}{^{1}_{1}}{\mathop{\tau}}} \& {\sideset{^{1}}{^{1}_{2}}{\mathop{\tau}}} \\
	\&\& {\sideset{^{2}}{^{0}_{-4}}{\mathop{\tau}}} \& {\sideset{^{2}}{^{0}_{-3}}{\mathop{\tau}}} \& {\sideset{^{2}}{^{0}_{-2}}{\mathop{\tau}}} \& {\sideset{^{2}}{^{0}_{-1}}{\mathop{\tau}}} \&\&\& {\sideset{^{0}}{^{1}_{0}}{\mathop{\tau}}} \& {\sideset{^{0}}{^{1}_{1}}{\mathop{\tau}}} \& {\sideset{^{0}}{^{1}_{2}}{\mathop{\tau}}} \\
	{\sideset{^{3}}{^{-1}_{-4}}{\mathop{\tau}}} \&\&\& {\sideset{^{1}}{^{0}_{-3}}{\mathop{\tau}}} \& {\sideset{^{1}}{^{0}_{-2}}{\mathop{\tau}}} \& {\sideset{^{1}}{^{0}_{-1}}{\mathop{\tau}}} \& {\sideset{^{1}}{^{0}_{0}}{\mathop{\tau}}} \&\&\& {\sideset{^{-1}}{^{1}_{1}}{\mathop{\tau}}} \& {\sideset{^{-1}}{^{1}_{2}}{\mathop{\tau}}} \\
	{\sideset{^{2}}{^{-1}_{-4}}{\mathop{\tau}}} \& {\sideset{^{2}}{^{-1}_{-3}}{\mathop{\tau}}} \&\&\& {\sideset{^{0}}{^{0}_{-2}}{\mathop{\tau}}} \& {\sideset{^{0}}{^{0}_{-1}}{\mathop{\tau}}} \& {\sideset{^{0}}{^{0}_{0}}{\mathop{\tau}}} \& {\sideset{^{0}}{^{0}_{1}}{\mathop{\tau}}} \&\&\& {\sideset{^{-2}}{^{1}_{2}}{\mathop{\tau}}} \\
	{\sideset{^{1}}{^{-1}_{-4}}{\mathop{\tau}}} \& {\sideset{^{1}}{^{-1}_{-3}}{\mathop{\tau}}} \& {\sideset{^{1}}{^{-1}_{-2}}{\mathop{\tau}}} \&\&\& {\sideset{^{-1}}{^{0}_{-1}}{\mathop{\tau}}} \& {\sideset{^{-1}}{^{0}_{0}}{\mathop{\tau}}} \& {\sideset{^{-1}}{^{0}_{1}}{\mathop{\tau}}} \& {\sideset{^{-1}}{^{0}_{2}}{\mathop{\tau}}} \\
	{\sideset{^{0}}{^{-1}_{-4}}{\mathop{\tau}}} \& {\sideset{^{0}}{^{-1}_{-3}}{\mathop{\tau}}} \& {\sideset{^{0}}{^{-1}_{-2}}{\mathop{\tau}}} \& {\sideset{^{0}}{^{-1}_{-1}}{\mathop{\tau}}} \&\&\& {\sideset{^{-2}}{^{0}_{0}}{\mathop{\tau}}} \& {\sideset{^{-2}}{^{0}_{1}}{\mathop{\tau}}} \& {\sideset{^{-2}}{^{0}_{2}}{\mathop{\tau}}} \& {\sideset{^{-2}}{^{0}_{3}}{\mathop{\tau}}} \\
	\& {\sideset{^{-1}}{^{-1}_{-3}}{\mathop{\tau}}} \& {\sideset{^{-1}}{^{-1}_{-2}}{\mathop{\tau}}} \& {\sideset{^{-1}}{^{-1}_{-1}}{\mathop{\tau}}} \& {\sideset{^{-1}}{^{-1}_{0}}{\mathop{\tau}}}
	\arrow[from=2-9, to=2-8]
	\arrow[from=2-10, to=2-9]
	\arrow[from=3-10, to=2-10]
	\arrow[from=3-11, to=3-10]
	\arrow[from=3-9, to=2-9]
	\arrow[from=3-10, to=3-9]
	\arrow[from=2-9, to=3-10]
	\arrow[from=3-9, to=5-8]
	\arrow[from=4-7, to=3-9]
	\arrow[from=2-8, to=4-7]
	\arrow[from=4-7, to=4-6]
	\arrow[from=5-8, to=5-7]
	\arrow[from=5-7, to=4-7]
	\arrow[from=5-6, to=4-6]
	\arrow[from=5-7, to=5-6]
	\arrow[from=6-8, to=5-8]
	\arrow[from=4-6, to=5-7]
	\arrow[from=4-6, to=4-5]
	\arrow[from=5-6, to=5-5]
	\arrow[from=5-5, to=4-5]
	\arrow[from=4-5, to=5-6]
	\arrow[from=3-11, to=2-11]
	\arrow[from=2-11, to=2-10]
	\arrow[from=2-10, to=3-11]
	\arrow[from=2-8, to=1-8]
	\arrow[from=1-8, to=1-7]
	\arrow[from=6-7, to=5-7]
	\arrow[from=6-7, to=6-6]
	\arrow[from=6-8, to=6-7]
	\arrow[from=6-6, to=5-6]
	\arrow[from=5-6, to=6-7]
	\arrow[from=5-7, to=6-8]
	\arrow[from=2-9, to=1-9]
	\arrow[from=1-9, to=1-8]
	\arrow[from=4-6, to=3-6]
	\arrow[from=3-6, to=3-5]
	\arrow[from=4-5, to=3-5]
	\arrow[from=3-5, to=4-6]
	\arrow[from=3-6, to=2-8]
	\arrow[from=1-7, to=3-6]
	\arrow[from=3-5, to=2-5]
	\arrow[from=2-5, to=1-7]
	\arrow[from=1-10, to=1-9]
	\arrow[from=2-10, to=1-10]
	\arrow[from=6-9, to=6-8]
	\arrow[from=4-11, to=3-11]
	\arrow[from=4-10, to=3-10]
	\arrow[from=4-11, to=4-10]
	\arrow[from=5-8, to=4-10]
	\arrow[from=4-10, to=6-9]
	\arrow[from=1-8, to=2-9]
	\arrow[from=1-9, to=2-10]
	\arrow[from=3-10, to=4-11]
	\arrow[from=4-5, to=4-4]
	\arrow[from=3-5, to=3-4]
	\arrow[from=4-4, to=3-4]
	\arrow[from=2-5, to=2-4]
	\arrow[from=3-4, to=2-4]
	\arrow[from=5-5, to=7-4]
	\arrow[from=7-4, to=6-6]
	\arrow[from=6-6, to=8-5]
	\arrow[from=8-5, to=7-7]
	\arrow[from=7-7, to=6-7]
	\arrow[from=7-8, to=7-7]
	\arrow[from=7-8, to=6-8]
	\arrow[from=7-9, to=6-9]
	\arrow[from=7-9, to=7-8]
	\arrow[from=7-10, to=7-9]
	\arrow[from=5-11, to=4-11]
	\arrow[from=6-9, to=5-11]
	\arrow[from=5-11, to=7-10]
	\arrow[from=6-3, to=5-5]
	\arrow[from=4-4, to=6-3]
	\arrow[from=7-3, to=6-3]
	\arrow[from=6-3, to=6-2]
	\arrow[from=7-3, to=7-2]
	\arrow[from=7-2, to=6-2]
	\arrow[from=8-4, to=7-4]
	\arrow[from=8-4, to=8-3]
	\arrow[from=8-3, to=7-3]
	\arrow[from=8-5, to=8-4]
	\arrow[from=8-3, to=8-2]
	\arrow[from=8-2, to=7-2]
	\arrow[from=7-2, to=7-1]
	\arrow[from=7-1, to=6-1]
	\arrow[from=6-2, to=6-1]
	\arrow[from=6-2, to=5-2]
	\arrow[from=5-2, to=5-1]
	\arrow[from=6-1, to=5-1]
	\arrow[from=5-1, to=4-1]
	\arrow[from=5-2, to=4-4]
	\arrow[from=3-4, to=3-3]
	\arrow[from=3-3, to=2-3]
	\arrow[from=2-4, to=2-3]
	\arrow[from=2-3, to=2-2]
	\arrow[from=4-1, to=3-3]
	\arrow[from=3-3, to=5-2]
	\arrow[from=2-2, to=4-1]
	\arrow[from=2-3, to=3-4]
	\arrow[from=3-4, to=4-5]
	\arrow[from=2-4, to=3-5]
	\arrow[from=6-7, to=7-8]
	\arrow[from=6-8, to=7-9]
	\arrow[from=5-1, to=6-2]
	\arrow[from=6-1, to=7-2]
	\arrow[from=6-2, to=7-3]
	\arrow[from=7-2, to=8-3]
	\arrow[from=7-3, to=8-4]
	\arrow[from=7-4, to=7-3]
\end{tikzcd}
}

By imposing the condition $\sideset{^{n}}{^{m}_{l}}{\mathop{\tau}}=\sideset{^{n+1}}{^{m+1}_{l+1}}{\mathop{\tau}}, \sigma_{l}^{n}:=\sideset{^{n}}{^{1}_{l}}{\mathop{\tau}}$ on (\ref{equ:hirota-miwa}), we will get a new partial difference equation:

\begin{equation}\label{equ:reduction111}
    \sigma_{l+2}^{n+2}=\frac{\sigma_{l+1}^{n} \sigma_{l+1}^{n+2}+\sigma_l^{n+1} \sigma_{l+2}^{n+1}}{\sigma_{l}^n}.
\end{equation}
This equation is the discrete Boussinesq equation\cite{date1983method}\cite{zabrodin1997hirota}.

Reduce the above quiver in $(l,m,n)=(1,1,1)$ direction, we get the corresponding quiver: 

\[\begin{tikzcd}
	{\sigma_{-2}^{2}} & {\sigma_{-1}^{2}} & {\sigma_{0}^{2}} & {\sigma_{1}^{2}} \\
	& {\sigma_{-1}^{1}} & {\sigma_{0}^{1}} & {\sigma_{1}^{1}} & {\sigma_{2}^{1}} \\
	&& {\sigma_{0}^{0}} & {\sigma_{1}^{0}} & {\sigma_{2}^{0}} & {\sigma_{3}^{0}} \\
	&&& {\sigma_{1}^{-1}} & {\sigma_{2}^{-1}} & {\sigma_{3}^{-1}} & {\sigma_{4}^{-1}}
	\arrow[from=2-4, to=2-3]
	\arrow[from=3-4, to=2-4]
	\arrow[from=4-5, to=3-5]
	\arrow[from=4-6, to=4-5]
	\arrow[from=3-6, to=3-5]
	\arrow[from=3-5, to=3-4]
	\arrow[from=4-6, to=3-6]
	\arrow[from=3-5, to=4-6]
	\arrow[from=4-7, to=4-6]
	\arrow[from=3-4, to=3-3]
	\arrow[from=3-3, to=2-3]
	\arrow[from=2-3, to=3-4]
	\arrow[from=2-3, to=2-2]
	\arrow[from=2-2, to=1-2]
	\arrow[from=1-2, to=1-1]
	\arrow[from=2-5, to=2-4]
	\arrow[from=2-5, to=3-5]
	\arrow[from=3-4, to=4-4]
	\arrow[from=4-5, to=4-4]
	\arrow[from=2-4, to=1-4]
	\arrow[from=1-4, to=1-3]
	\arrow[from=1-3, to=1-2]
	\arrow[from=2-3, to=1-3]
	\arrow[from=1-2, to=2-3]
	\arrow[from=1-3, to=2-4]
	\arrow[from=2-4, to=3-5]
	\arrow[from=3-4, to=4-5]
	\arrow[from=1-4, to=3-3]
	\arrow[from=3-3, to=2-5]
	\arrow[from=2-5, to=4-4]
	\arrow[from=4-4, to=3-6]
	\arrow[from=2-2, to=1-4]
\end{tikzcd}\]

In general case, equation $PB=0$ implies  $\{J_Z, \mathcal{R}\}\subseteq J_Z$. 
Hence a Hamiltonian structure $P=(p_{ij})$ of (\ref{equ:reduction111}) satisfies:

\begin{equation*}
    \begin{split}
        &p_{i,4k-5}-p_{i,4k-3}-p_{i,4k-1}+p_{i,4k+1}=0,\\
        &p_{i,4k-3}-p_{i,4k-2}-p_{i,4k}+p_{i,4k+2}+p_{i,4k+4}-p_{i,4k+5}=0,\\
        &p_{i,4k-2}-p_{i,4k-1}-p_{i,4k+1}+p_{i,4k+3}+p_{i,4k+5}-p_{i,4k+6}=0,\\
        &p_{i,4k+2}-p_{i,4k+4}-p_{i,4k+6}+p_{i,4k+8}=0
    \end{split}
\end{equation*}
and
\begin{equation*}
    \begin{array}{ll}
        p_{i,4k-3}+p_{i,4k-1}-p_{i,4k}=p_{i-1,4k-1}\quad\quad&i \equiv 1,2,3 \quad\text{mod } 4,   \\
        p_{i,4k}+p_{i,4k+2}-p_{i,4k-1}=p_{i+1,4k} &  i \equiv 0,1,2 \quad\text{mod } 4,   \\
        p_{i,j}=p_{i-1,j-1} & i,j \not\equiv 0 \quad\text{mod } 4. 
    \end{array}
\end{equation*}
and
\begin{equation*}
	\{\Lambda_{n}\sigma_{l}^{n},\Lambda_{n}\sigma_{l'}^{n'}\}=\Lambda_{n}\{\sigma_{l}^{n},\sigma_{l'}^{n'}\}.
\end{equation*}

We can solve a Hamiltonian structure of (\ref{equ:reduction111}), which is given by
\begin{equation*}
        \{\sigma_{l}^{n},\sigma_{l'}^{n'}\}=
        a_0(n'+l'-n-l)\sigma_{l}^{n}\sigma_{l'}^{n'}, \quad\text{where} \quad a_0 \in \mathbb{Q}.
\end{equation*}
Note that we cannot find the second Hamiltonian structure for \eqref{equ:reduction111} by using the above method. We conjecture that  the equation \eqref{equ:reduction111} possesses the second Hamiltonian structure which takes more complicated form. 
\end{eg}

\begin{eg}There is an example with the Hamiltonian structure which has more degree of freedom. Given the following quiver and the composition of mutations at $\{\sideset{^{n}}{^{m}_{l}}{\mathop{\tau}}|2n-2m+l=-2\}$. 

\resizebox{\textwidth}{!}{
    \begin{tikzcd}[ampersand replacement=\&]
    	\&\&\&\& {\sideset{^{2}}{^{1}_{-4}}{\mathop{\tau}}} \& {\sideset{^{2}}{^{1}_{-3}}{\mathop{\tau}}} \& {\sideset{^{2}}{^{1}_{-2}}{\mathop{\tau}}} \& {\sideset{^{2}}{^{1}_{-1}}{\mathop{\tau}}} \& {\sideset{^{2}}{^{1}_{0}}{\mathop{\tau}}} \\
	\& {\sideset{^{3}}{^{0}_{-5}}{\mathop{\tau}}} \& {\sideset{^{3}}{^{0}_{-4}}{\mathop{\tau}}} \&\&\&\& {\sideset{^{1}}{^{1}_{-2}}{\mathop{\tau}}} \& {\sideset{^{1}}{^{1}_{-1}}{\mathop{\tau}}} \& {\sideset{^{1}}{^{1}_{0}}{\mathop{\tau}}} \& {\sideset{^{1}}{^{1}_{1}}{\mathop{\tau}}} \& {\sideset{^{1}}{^{1}_{2}}{\mathop{\tau}}} \\
	\& {\sideset{^{2}}{^{0}_{-5}}{\mathop{\tau}}} \& {\sideset{^{2}}{^{0}_{-4}}{\mathop{\tau}}} \& {\sideset{^{2}}{^{0}_{-3}}{\mathop{\tau}}} \& {\sideset{^{2}}{^{0}_{-2}}{\mathop{\tau}}} \&\&\&\& {\sideset{^{0}}{^{1}_{0}}{\mathop{\tau}}} \& {\sideset{^{0}}{^{1}_{1}}{\mathop{\tau}}} \& {\sideset{^{0}}{^{1}_{2}}{\mathop{\tau}}} \\
	\&\& {\sideset{^{1}}{^{0}_{-4}}{\mathop{\tau}}} \& {\sideset{^{1}}{^{0}_{-3}}{\mathop{\tau}}} \& {\sideset{^{1}}{^{0}_{-2}}{\mathop{\tau}}} \& {\sideset{^{1}}{^{0}_{-1}}{\mathop{\tau}}} \& {\sideset{^{1}}{^{0}_{0}}{\mathop{\tau}}} \&\&\&\& {\sideset{^{-1}}{^{1}_{2}}{\mathop{\tau}}} \\
	{\sideset{^{2}}{^{-1}_{-4}}{\mathop{\tau}}} \&\&\&\& {\sideset{^{0}}{^{0}_{-2}}{\mathop{\tau}}} \& {\sideset{^{0}}{^{0}_{-1}}{\mathop{\tau}}} \& {\sideset{^{0}}{^{0}_{0}}{\mathop{\tau}}} \& {\sideset{^{0}}{^{0}_{1}}{\mathop{\tau}}} \& {\sideset{^{0}}{^{0}_{2}}{\mathop{\tau}}} \\
	{\sideset{^{1}}{^{-1}_{-4}}{\mathop{\tau}}} \& {\sideset{^{1}}{^{-1}_{-3}}{\mathop{\tau}}} \& {\sideset{^{1}}{^{-1}_{-2}}{\mathop{\tau}}} \&\&\&\& {\sideset{^{-1}}{^{0}_{0}}{\mathop{\tau}}} \& {\sideset{^{-1}}{^{0}_{1}}{\mathop{\tau}}} \& {\sideset{^{-1}}{^{0}_{2}}{\mathop{\tau}}} \& {\sideset{^{-1}}{^{0}_{3}}{\mathop{\tau}}} \\
	{\sideset{^{0}}{^{-1}_{-4}}{\mathop{\tau}}} \& {\sideset{^{0}}{^{-1}_{-3}}{\mathop{\tau}}} \& {\sideset{^{0}}{^{-1}_{-2}}{\mathop{\tau}}} \& {\sideset{^{0}}{^{-1}_{-1}}{\mathop{\tau}}} \& {\sideset{^{0}}{^{-1}_{0}}{\mathop{\tau}}} \&\&\&\& {\sideset{^{-2}}{^{0}_{2}}{\mathop{\tau}}} \& {\sideset{^{-2}}{^{0}_{3}}{\mathop{\tau}}} \\
	\&\& {\sideset{^{-1}}{^{-1}_{-2}}{\mathop{\tau}}} \& {\sideset{^{-1}}{^{-1}_{-1}}{\mathop{\tau}}} \& {\sideset{^{-1}}{^{-1}_{0}}{\mathop{\tau}}} \& {\sideset{^{-1}}{^{-1}_{1}}{\mathop{\tau}}} \& {\sideset{^{-1}}{^{-1}_{2}}{\mathop{\tau}}}
	\arrow[from=2-9, to=2-8]
	\arrow[from=2-10, to=2-9]
	\arrow[from=3-10, to=2-10]
	\arrow[from=3-11, to=3-10]
	\arrow[from=3-9, to=2-9]
	\arrow[from=3-10, to=3-9]
	\arrow[from=2-9, to=3-10]
	\arrow[from=3-9, to=5-8]
	\arrow[from=4-7, to=3-9]
	\arrow[from=2-8, to=4-7]
	\arrow[from=4-7, to=4-6]
	\arrow[from=5-8, to=5-7]
	\arrow[from=5-7, to=4-7]
	\arrow[from=5-6, to=4-6]
	\arrow[from=5-7, to=5-6]
	\arrow[from=4-6, to=5-7]
	\arrow[from=4-6, to=4-5]
	\arrow[from=2-8, to=1-8]
	\arrow[from=1-8, to=1-7]
	\arrow[from=4-5, to=3-5]
	\arrow[from=4-11, to=3-11]
	\arrow[from=4-5, to=4-4]
	\arrow[from=3-5, to=3-4]
	\arrow[from=4-4, to=3-4]
	\arrow[from=4-4, to=6-3]
	\arrow[from=6-3, to=6-2]
	\arrow[from=6-2, to=6-1]
	\arrow[from=6-1, to=5-1]
	\arrow[from=3-4, to=3-3]
	\arrow[from=3-3, to=2-3]
	\arrow[from=2-3, to=2-2]
	\arrow[from=3-4, to=4-5]
	\arrow[from=5-7, to=4-7]
	\arrow[from=6-9, to=6-8]
	\arrow[from=6-8, to=5-8]
	\arrow[from=6-9, to=5-9]
	\arrow[from=3-3, to=3-2]
	\arrow[from=3-2, to=2-2]
	\arrow[from=7-3, to=6-3]
	\arrow[from=7-4, to=7-3]
	\arrow[from=7-5, to=7-4]
	\arrow[from=8-5, to=7-5]
	\arrow[from=8-6, to=8-5]
	\arrow[from=8-7, to=8-6]
	\arrow[from=3-2, to=5-1]
	\arrow[from=5-6, to=7-5]
	\arrow[from=6-8, to=8-7]
	\arrow[from=2-3, to=1-5]
	\arrow[from=1-5, to=3-4]
	\arrow[from=3-4, to=1-6]
	\arrow[from=1-6, to=3-5]
	\arrow[from=1-6, to=1-5]
	\arrow[from=1-7, to=1-6]
	\arrow[from=4-6, to=2-8]
	\arrow[from=2-7, to=4-6]
	\arrow[from=3-5, to=2-7]
	\arrow[from=2-7, to=1-7]
	\arrow[from=2-8, to=2-7]
	\arrow[from=5-9, to=5-8]
	\arrow[from=5-8, to=3-10]
	\arrow[from=3-10, to=5-9]
	\arrow[from=5-9, to=4-11]
	\arrow[from=6-10, to=6-9]
	\arrow[from=7-10, to=6-10]
	\arrow[from=6-10, to=4-11]
	\arrow[from=2-2, to=3-3]
	\arrow[from=5-8, to=6-9]
	\arrow[from=1-7, to=2-8]
	\arrow[from=5-6, to=5-5]
	\arrow[from=5-5, to=4-5]
	\arrow[from=6-7, to=5-7]
	\arrow[from=4-4, to=4-3]
	\arrow[from=4-3, to=3-3]
	\arrow[from=5-7, to=6-8]
	\arrow[from=4-5, to=5-6]
	\arrow[from=3-3, to=4-4]
	\arrow[from=2-9, to=1-9]
	\arrow[from=1-9, to=1-8]
	\arrow[from=1-8, to=2-9]
	\arrow[from=3-11, to=2-11]
	\arrow[from=2-11, to=2-10]
	\arrow[from=2-10, to=3-11]
	\arrow[from=7-5, to=6-7]
	\arrow[from=6-7, to=8-6]
	\arrow[from=8-6, to=6-8]
	\arrow[from=7-10, to=7-9]
	\arrow[from=7-9, to=6-9]
	\arrow[from=8-7, to=7-9]
	\arrow[from=7-4, to=5-6]
	\arrow[from=5-5, to=7-4]
	\arrow[from=6-3, to=5-5]
	\arrow[from=8-5, to=8-4]
	\arrow[from=8-4, to=7-4]
	\arrow[from=8-4, to=8-3]
	\arrow[from=8-3, to=7-3]
	\arrow[from=7-3, to=8-4]
	\arrow[from=7-4, to=8-5]
	\arrow[from=5-1, to=4-3]
	\arrow[from=4-3, to=6-2]
	\arrow[from=6-2, to=4-4]
	\arrow[from=7-3, to=7-2]
	\arrow[from=7-2, to=6-2]
	\arrow[from=7-2, to=7-1]
	\arrow[from=7-1, to=6-1]
	\arrow[from=6-1, to=7-2]
	\arrow[from=6-2, to=7-3]
	\arrow[from=6-8, to=6-7]
	\arrow[from=6-9, to=7-10]
    \end{tikzcd}
}
Reduce this quiver in $(l,m,n)=(2,1,0)$ direction and get the new quiver:

\resizebox{0.9\textwidth}{!}{
    \begin{tikzcd}[ampersand replacement=\&]
	{\sigma_{-2}^{2}} \& {\sigma_{-1}^{2}} \& {\sigma_{0}^{2}} \\
	{\sigma_{-2}^{1}} \& {\sigma_{-1}^{1}} \& {\sigma_{0}^{1}} \& {\sigma_{1}^{1}} \& {\sigma_{2}^{1}} \\
	\&\& {\sigma_{0}^{0}} \& {\sigma_{1}^{0}} \& {\sigma_{2}^{0}} \& {\sigma_{3}^{0}} \& {\sigma_{4}^{0}} \\
	\&\&\&\& {\sigma_{2}^{-1}} \& {\sigma_{3}^{-1}} \& {\sigma_{4}^{-1}}
	\arrow[from=2-5, to=2-4]
	\arrow[from=3-6, to=3-5]
	\arrow[from=3-5, to=2-5]
	\arrow[from=3-4, to=2-4]
	\arrow[from=3-5, to=3-4]
	\arrow[from=2-4, to=3-5]
	\arrow[from=2-4, to=2-3]
	\arrow[from=2-3, to=1-3]
	\arrow[from=2-3, to=2-2]
	\arrow[from=1-3, to=1-2]
	\arrow[from=2-2, to=1-2]
	\arrow[from=1-2, to=1-1]
	\arrow[from=1-2, to=2-3]
	\arrow[from=3-5, to=2-5]
	\arrow[from=4-6, to=3-6]
	\arrow[from=4-7, to=3-7]
	\arrow[from=3-7, to=3-6]
	\arrow[from=3-6, to=4-7]
	\arrow[from=3-4, to=3-3]
	\arrow[from=3-3, to=2-3]
	\arrow[from=4-5, to=3-5]
	\arrow[from=2-2, to=2-1]
	\arrow[from=2-1, to=1-1]
	\arrow[from=3-5, to=4-6]
	\arrow[from=2-3, to=3-4]
	\arrow[from=1-1, to=2-2]
	\arrow[from=4-6, to=4-5]
	\arrow[from=1-3, to=2-1]
	\arrow[curve={height=-24pt}, from=2-4, to=2-2]
	\arrow[curve={height=-24pt}, from=2-2, to=2-5]
	\arrow[from=2-5, to=3-3]
	\arrow[curve={height=-24pt}, from=3-3, to=3-6]
	\arrow[curve={height=-24pt}, from=3-4, to=3-7]
	\arrow[curve={height=-24pt}, from=3-6, to=3-4]
	\arrow[from=3-7, to=4-5]
	\arrow[from=4-7, to=4-6]
	\arrow[curve={height=-24pt}, from=2-1, to=2-4]
\end{tikzcd}
}

This quiver corresponds to the partial difference 
\begin{equation}\label{equ:reduction210}
    \sigma_{l+3}^{n+1}=\frac{\sigma_{l+1}^{n} \sigma_{l+2}^{n+1}+\sigma_l^{n+1} \sigma_{l+3}^{n}}{\sigma_{l}^n}.
\end{equation}

We give the conditions which the Hamiltonian structure $P=(p_{ij})$ of (\ref{equ:reduction210}) should satisfy:

\begin{equation*}
    \begin{split}
        &p_{i,5k-3}-p_{i,5k-1}-p_{i,5k+1}+p_{i,5k+3}=0,\\
        -&p_{i,5k-3}+p_{i,5k-2}+p_{i,5k}-p_{i,5k+2}-p_{i,5k+3}+p_{i,5k+4}=0,\\
        -&p_{i,5k-2}+p_{i,5k-1}+p_{i,5k+1}-p_{i,5k+3}-p_{i,5k+5}+p_{i,5k+6}=0,\\
        -&p_{i,5k}+p_{i,5k+1}+p_{i,5k+2}-p_{i,5k+4}-p_{i,5k+6}+p_{i,5k+7}=0,\\
        &p_{i,5k+1}-p_{i,5k+3}-p_{i,5k+5}+p_{i,5k+7}=0
    \end{split}
\end{equation*}
and
\begin{equation*}
    \begin{array}{ll}
        p_{i,5k-1}+p_{i,5k+1}-p_{i,5k}=p_{i-1,5k-1}\quad\quad&i \equiv 1,2,3,4 \quad\text{mod } 5,   \\
        p_{i,5k}+p_{i,5k-2}-p_{i,5k-1}=p_{i+1,5k} &  i \equiv 0,1,2,3 \quad\text{mod } 5,   \\
        p_{i,j}=p_{i-1,j-1} & i,j \not\equiv 0 \quad\text{mod } 5. 
    \end{array}
\end{equation*}
and
\begin{equation*}
	\{\Lambda_{n}\sigma_{l}^{n},\Lambda_{n}\sigma_{l'}^{n'}\}=\Lambda_{n}\{\sigma_{l}^{n},\sigma_{l'}^{n'}\}.
\end{equation*}

For any integer $n$, there exists a unique integer $k$ such that $n-3k \in \{-1,0,1\}$. For fixed $a_0,b_0 \in \mathbb{Q} $, define a map  
$f:\mathbb{Z}\rightarrow \mathbb{Q}$:
\begin{equation*}
    f(n)=(n-2k)a_0 + k b_0.
\end{equation*}

Then the general Hamiltonian structure of (\ref{equ:reduction111}) is given by:
\begin{equation*}
    \{\sigma_{l}^{n},\sigma_{l'}^{n'}\}=
        (q_{n-n'}+ f(n'+l'-n-l))\sigma_{l}^{n}\sigma_{l'}^{n'},
\end{equation*}
where $q_{n-n'} \in \mathbb{Q}$ satisfying $q_{k}=-q_{-k}$. Actually, this Poisson bracket gives a tri-Hamiltonian structure of \eqref{equ:reduction210}.
\end{eg}
    


\end{document}